\newtheorem{assumption}{Assumption}
\newtheorem{remark}{Remark}
\newtheorem{problem}{Problem}
\newcommand{\reals}{\mathbb{R}}
\newcommand{\naturals}{\mathbb{N}}
\newcommand{\Hk}{\mathcal{H}_k}
\newcommand{\HK}{\mathcal{H}_K}
\newcommand{\Pcal}{\mathcal{P}}
\newcommand{\Dcal}{\mathcal{D}}
\newcommand{\Tau}{\mathcal{T}}
\newcommand{\Ucal}{\mathcal{U}}
\newcommand{\MMD}{\mathrm{MMD}}
\newcommand{\x}{\boldsymbol{x}}
\newcommand{\w}{\boldsymbol{w}}
\renewcommand{\path}{\omega}
\newcommand{\pathX}{\path_{x}}
\newcommand{\pathXbold}{\boldsymbol{\path}_{x}}
\newcommand{\PathX}{\Omega_{x}}
\newcommand{\policy}{\pi_x}
\newcommand{\strategy}{\pi_s}
\newcommand{\Strategy}{\Pi_s}
\newcommand{\adversary}{\xi}
\newcommand{\Adversary}{\Xi}
\newcommand{\pathrmdp}{\omega}
\newcommand{\Pathrmdp}{\Omega}
\newcommand{\last}{\mathrm{last}}
\newcommand{\safe}{\mathrm{safe}}
\newcommand{\Xreach}{X_{\text{reach}}}
\newcommand{\Xavoid}{X_{\text{avoid}}}
\newcommand{\Xsafe}{X_\safe}
\newcommand{\Sreach}{S_{\text{reach}}}
\newcommand{\ig}[1]{\textcolor{blue}{[IG: #1]}}
\begin{document}

%%
%% The "title" command has an optional parameter,
%% allowing the author to define a "short title" to be used in page headers.
\title{%Optimal, Formal, Model-Free: Stochastic Control Through Conditional Mean Embeddings and Robust MDPs
%Towards Formal Data-Driven Control Through Conditional Mean Embeddings: a Robust MDP Perspective
%Formal Data-Driven Control of Stochastic Systems via Conditional Mean Embeddings: an Uncertain MDP Perspective
Data-Driven Control via Conditional Mean Embeddings: Formal Guarantees via Uncertain MDP Abstraction
% Data-Driven Control Synthesis for Stochastic Systems via Conditional Mean Embeddings and Uncertain MDP Abstraction
}

%%
%% The "author" command and its associated commands are used to define
%% the authors and their affiliations.
%% Of note is the shared affiliation of the first two authors, and the
%% "authornote" and "authornotemark" commands
%% used to denote shared contribution to the research.
%\author{Anonymous Authors}

%\iffalse

\author{Ibon Gracia}
\authornote{Corresponding author}
\email{ibon.gracia@colorado.edu}
%\orcid{1234-5678-9012}
%\author{G.K.M. Tobin}
%\authornotemark[1]
%\email{webmaster@marysville-ohio.com}
\affiliation{%
  \institution{University of Colorado Boulder}
  \city{Boulder}
  \state{Colorado}
  \country{USA}
}

\author{Morteza Lahijanian}
\affiliation{%
  \institution{University of Colorado Boulder}
  \city{Boulder}
  \state{Colorado}
  \country{USA}
}\email{morteza.lahijanian@colorado.edu}

%\fi

%%
%% By default, the full list of authors will be used in the page
%% headers. Often, this list is too long, and will overlap
%% other information printed in the page headers. This command allows
%% the author to define a more concise list
%% of authors' names for this purpose.
\renewcommand{\shortauthors}{Gracia et al.}

%%
%% The abstract is a short summary of the work to be presented in the
%% article.
\begin{abstract}
    % Control of stochastic systems becomes especially challenging when the system's model is unknown and the controller has to be synthesized based on available data. This becomes especially difficult in safety-critical applications, where ensuring a prescribed level of safety and/or performance is paramount.
    % We present a framework for data-driven control synthesis with performance guarantees for stochastic systems leveraging conditional mean embeddings (CMEs) and uncertain Markov decision processes (UMDPs). We first  leverage data from the system's trajectories to learn the transition kernel of the system as a CME model. Then, we obtain a finite-state abstraction of the CME model in the form of a UMDP, whose uncertain transition probabilities account for both the learning gap and the error introduced by discretizing the continuous state-space. Finally, we obtain a controller for the unknown system via robust dynamic programming on the UMDP, which also yields a formal statement regarding the system's performance. %Furthermore, we design an algorithm that is guaranteed to obtain a controller whose performance is as close tho the optimal one as desired, and which is guaranteed to terminate in finite time. 
    % We empirically validate our method through a temperature regulation benchmark.

    Controlling stochastic systems with unknown dynamics and under complex specifications is specially challenging in safety-critical settings, where performance guarantees are essential. We propose a data-driven policy synthesis framework that yields formal performance guarantees for such systems using conditional mean embeddings (CMEs) and uncertain Markov decision processes (UMDPs). From trajectory data, we learn the system's transition kernel as a CME, then construct a finite-state UMDP abstraction whose transition uncertainties capture learning and discretization errors. Next, we generate a policy with formal performance bounds through robust dynamic programming. We demonstrate and empirically validate our method through a temperature regulation benchmark.
\end{abstract}

%%
%% The code below is generated by the tool at http://dl.acm.org/ccs.cfm.
%% Please copy and paste the code instead of the example below.
%%
\begin{CCSXML}
<ccs2012>
 <concept>
  <concept_id>00000000.0000000.0000000</concept_id>
  <concept_desc>Do Not Use This Code, Generate the Correct Terms for Your Paper</concept_desc>
  <concept_significance>500</concept_significance>
 </concept>
 <concept>
  <concept_id>00000000.00000000.00000000</concept_id>
  <concept_desc>Do Not Use This Code, Generate the Correct Terms for Your Paper</concept_desc>
  <concept_significance>300</concept_significance>
 </concept>
 <concept>
  <concept_id>00000000.00000000.00000000</concept_id>
  <concept_desc>Do Not Use This Code, Generate the Correct Terms for Your Paper</concept_desc>
  <concept_significance>100</concept_significance>
 </concept>
 <concept>
  <concept_id>00000000.00000000.00000000</concept_id>
  <concept_desc>Do Not Use This Code, Generate the Correct Terms for Your Paper</concept_desc>
  <concept_significance>100</concept_significance>
 </concept>
</ccs2012>
\end{CCSXML}

\iffalse
\ccsdesc[500]{Do Not Use This Code~Generate the Correct Terms for Your Paper}
\ccsdesc[300]{Do Not Use This Code~Generate the Correct Terms for Your Paper}
\ccsdesc{Do Not Use This Code~Generate the Correct Terms for Your Paper}
\ccsdesc[100]{Do Not Use This Code~Generate the Correct Terms for Your Paper}
\fi

%%
%% Keywords. The author(s) should pick words that accurately describe
%% the work being presented. Separate the keywords with commas.
\keywords{Stochastic Systems, Safety-Critical Systems, Data-Driven Control, Kernel Methods, Finite Abstraction, Uncertain MDPs}
%% A "teaser" image appears between the author and affiliation
%% information and the body of the document, and typically spans the
%% page.

% \received{20 February 2007}
% \received[revised]{12 March 2009}
% \received[accepted]{5 June 2009}

%%
%% This command processes the author and affiliation and title
%% information and builds the first part of the formatted document.
\maketitle

\section{Introduction}
\label{sec:intro}

% As autonomous dynamical systems continue to be deployed in the real world, providing guarantees on their performance becomes paramount. Modeling systems are typically subject to uncertainty, e.g., external disturbances, which are often random. In addition, accurate modeling of the dynamics is nearly impossible to their complex physics and environments.
% For this reason, data-driven control, which aims to synthesize controllers leveraging data samples from the system trajectories to compensate lack of accurate models, have emerged an essential topic. However, ensuring that these controllers satisfy performance requirements, safety, remains a major challenge. This work aims to address this challenge via formal end-to-end approach to data-driven synthesis that guarantees the requirements.

As autonomous dynamical systems continue to be deployed in real-world environments, providing guarantees on their performance becomes paramount. These systems are typically subject to uncertainty, e.g., random external disturbances, and accurately modeling their dynamics is often infeasible due to complex physics and environmental factors.
Consequently, \textit{data-driven control}, which seeks to synthesize controllers (policies) using trajectory data to compensate for the lack of accurate models, has emerged as an essential research direction. However, ensuring that such policies satisfy \emph{safety} requirements remains a major challenge.
This work aims to address this challenge by developing an end-to-end formal framework for data-driven control synthesis with complex requirements that provides rigorous guarantees on system requirements.

% As autonomous dynamical systems continue to be deployed in real-world environments, providing guarantees on their performance becomes paramount. These systems are typically subject to uncertainty, e.g., random external disturbances, and accurately modeling their dynamics is often infeasible due to complex physics and environmental factors.
% Consequently, data-driven control, which seeks to synthesize controllers from trajectory data to compensate for the lack of accurate models, has emerged as an essential research direction. However, ensuring that such controllers satisfy stringent safety requirements remains a major challenge.
% This work addresses this challenge by developing a formal framework for data-driven control synthesis that provides rigorous guarantees on system requirements.

Kernel methods are powerful tools for data-driven decision-making. They are especially attractive for safety-critical settings because they can quantify the gap between learned and true models, allowing designers to provide formal statements about the safety of unknown dynamical systems. 
A popular class of such methods is \emph{Gaussian process} (GP) regression, which is employed by
many recent works, e.g., \cite{skovbekk2021formal,schon2024data,reed2025learning}, to certify systems with unknown vector fields and additive noise. 
Another important class is \emph{conditional mean embeddings} (CMEs), which learn the transition kernel of an unknown stochastic system by learning its embedding into a \emph{reproducing kernel Hilbert space} (RKHS). 
Understanding the convergence of the empirical CME to the true one is an active area of research \cite{li2022optimal,grunewalder2012modelling,talwai2022sobolev}.
% %Although most results apply to the \emph{well-specified setting}, which assumes that the CME is a member of a specific RKHS, the work \cite{li2022optimal} further relaxes this assumption, allowing the CME to lie in a bigger interpolation space,
Specifically, \cite{li2022optimal} shows that the empirical CME converges to the true one at a rate $\mathcal{O}(\log(N)/N)$), with $N$ being the number of samples. 
Unfortunately, existing convergence results are either asymptotic \cite{park2020measure}, rely on quantities that are hard to bound \cite{li2022optimal,talwai2022sobolev}, or assume finite state spaces \cite{grunewalder2012modelling}.

Existing control-synthesis methods based on CMEs generally lack formal guarantees. Specifically, \cite{romao2023distributionally,thorpe2019model} approximate value functions via kernel regression. That approach also face fundamental limitations: when specifications involve reachability or temporal logic, the initial value function is an indicator function, which cannot be uniformly approximated by functions in an RKHS induced by a continuous kernel. 
% On the other hand, Existing CME uncertainty-quantification results likewise assume continuous kernels \cite{li2022optimal}.
% 
Other works \cite{nemmour2022maximum} consider distributionally robust optimization using ambiguity sets derived from samples. To ensure soundness, \cite{nemmour2022maximum} employs a \textit{maximum mean discrepancy} (MMD) ambiguity set and a \emph{conditional-value-at-risk} relaxation. It however introduces irreducible conservatism.

A CME-based work that provides rigorous guarantees for unknown systems is \cite{schon2024data}. It uses data to construct a CME and verifies safety via barrier certificates by ensuring that the true CME lies within an ambiguity set and by bounding approximation error through a GP surrogate. 
However, it is limited to safety verification and does not extend to control synthesis under complex objectives.

% A popular method of specifying complex objectives is via temporal logic formulas, and
% for control synthesis based on these properties mcuh success is achieved using finite-abstraction methods, e.g., for complex nonlinear stochastic systems \cite{gracia2025beyond} under uncertain dynamics \cite{skovbekk2021formal} and/or disturbance distributions \cite{nazeri2025data,gracia2024temporal,gracia2025efficient,Gracia:L4DC:2025}. These abstractions are typically variants of Markov decision processes (MDPs), like interval MDPs \cite{lahijanian2015formal} or uncertain MDPs (UMDPs) \cite{iyengar2005robust}. UMDPs admit control synthesis algorithms via robust dynamic programming (RDP), ensuring safety and performance guarantees. 
% For instance, IMDPs have been used in conjunction with GPs for data-driven control under additive noise in \cite{skovbekk2021formal,schon2024data,reed2025learning}. However, no formal approaches have been proposed to abstract CMEs into UMDPs.

A common method to specify complex system requirements is through \emph{linear temporal logic} (LTL)~\cite{baier2008principles} and its variants, 
% LTLf~\cite{Giacomo2013}, 
and significant success has been achieved in synthesizing policies for such specifications using \emph{finite-abstraction} techniques for stochastic systems with complex dynamics~\cite{gracia2025beyond}, uncertain dynamics~\cite{skovbekk2021formal}, and unknown disturbance~\cite{nazeri2025data,gracia2025efficient,Gracia:L4DC:2025}. 
These abstractions are typically variants of \emph{Markov Decision Processes} (MDPs), such as interval MDPs \cite{lahijanian2015formal} and \emph{uncertain MDPs} (UMDPs)~\cite{iyengar2005robust}. 
% UMDPs enable control synthesis via \emph{robust dynamic programming} (RDP). 
% IMDPs abstractions have been combined with GP regression to achieve data-driven control \cite{skovbekk2021formal,schon2024data,reed2025learning}; however, 
In kernel methods, IMDP abstractions have been used for GP regressed models~\cite{skovbekk2021formal,schon2024data,reed2025learning}, but
no formal approach exists for abstracting CMEs.

In this work, we present a novel data-driven control (policy) synthesis framework based on CMEs and UMDP abstraction with formal guarantees. We first learn the empirical CME of the system's transition kernel and then abstract it into a finite UMDP, quantifying the learning gap. Then, we compute a policy on the UMDP through \emph{robust dynamic programming} (RDP), yielding performance guarantees for the underlying (unknown) system. Our contributions are $4$-fold: 
(i) the \emph{first} data-driven control synthesis framework via CMEs that provides formal guarantees, 
(ii) a novel method of UMDP abstraction of a CME model, 
(iii) a concentration inequality for the empirical CME with explicit constants, 
% (iv) we improve state-of-the art results that rely
and 
(iv) relaxation of the common assumption that samples must be obtained at prescribed points, instead allowing sampling from an arbitrary distribution.  

\iffalse
%
\begin{itemize}
    \item Our approach is the first one to ensure formal control synthesis for CME models without approximations
    \item Finite abstraction of a continuous-state system and CME model to a UMDP
    \item Concentration inequality bounding the discrepancy between the empirical CME and the true one, with explicit constants
    \item Better results than state-of-the art using approximate control synthesis for CMEs
    \item \ig{maybe...} We lift the assumption that the samples have to be prompted at specific points and instead we assume that they come from some distribution
\end{itemize}
\fi
\section{Basic Notation}

Given a connected and compact subset $s \subset \reals^n$, we denote by $c_s$ its center point. For a measurable space $(X, \mathcal{F}(X))$, with $\mathcal{F}(X)$ being the Borel $\sigma$-algebra on $X$, we denote by $\Pcal(X)$ the set of Borel probability distributions (measures) on $X$. We let $\delta_x \in \Pcal(X)$ be the Dirac measure on $x\in X$. We use bold symbols to denote random variables, like $\x$, as opposed to points in the sample set $x \in X$.
\section{Problem Formulation}
\label{sec:problem}

We consider a controlled discrete-time stochastic process $\x_t \in \reals^n$ at each time step $t\in \mathbb{N}_0$ with controls $u_t$ taking values in the finite set $U$. Let $\mathcal{B}(\reals^n)$ be the Borel $\sigma$-algebra on $\reals^n$ with respect to the usual topology. The system evolves according to the \emph{transition kernel} $\Tau: \mathcal{B}(\reals^n) \times \reals^n \times U \rightarrow [0,1]$, which gives the probability that the successor state of $x_t \in \reals^n$ under control $u_t \in U$ belongs to the Borel set $B \in \mathcal{B}(\reals^n)$, i.e.,
\begin{align}
\label{eq:sys}
    \x_{t+1} \sim \Tau(\cdot \mid x_t, u_t),
\end{align}
% 
%\ml{Is $U$ continuous, finite, compact?  Also, the relationship between $\x_t$ and $u_t$ is unclear.  If you don't want to write the state space dynamics, you could perhaps define your process as $\{(\x_t, u_t)\}_{t \in \naturals_0}$}

Given $x_0,\ldots,x_T \in \reals^n$, $u_0,\dots,u_{T-1}\in U$, and $T \in \mathbb{N}_0$, we let $\pathX = x_0 \xrightarrow{u_0} \ldots \xrightarrow{u_{T-1}} x_T$ denote a finite \emph{trajectory} of System~\eqref{eq:sys}, with length
$|\pathX| = T+1$.
%\ml{$K+1$?  Is it the number of states or controls?}
We let $\PathX$ be the set of finite trajectories 
%\ml{remove the length part?}
%of length $|\pathX|<\infty$
and $\pathX(t)$ be the state of $\pathX$ at time $t$. 
A \emph{policy} $\policy: \PathX \to U$ of System~\eqref{eq:sys} maps each finite trajectory $\pathX$ to a control $\policy(\pathX) \in U$. 

Given a policy $\policy$ and an initial condition $x_0\in \reals^n$, the transition kernel $\Tau$ induces a unique probability measure over the system's trajectories, %, i.e., a measure $P_{x_0}^{\policy}$ on the measurable space $(X, \mathcal{B}(X))$ 
which uniquely extends to the set of trajectories of infinite length \cite{bertsekas1996stochastic}. For the sake of conciseness, we also denote this measure by $P_{x_0}^{\policy}$. Then, $P_{x_0}^{\policy}[\pathXbold(t) \in B]$ 
%\ml{$k$ is also used as the kernel.  Make sure you stick with $T$ for time step or use kernel notation.}
denotes the probability that $\pathXbold(t)$ belongs to the set $B\in \mathcal{B}(\reals^n)$ when following strategy $\policy$ and starting at $x_0$.
%\ml{if $x_0$ is given, the does it mean that $\pathXbold = x_0 \x_1 \x_2 ... \x_K$? In other words, the first state of $\pathXbold$ is not a random variable, right?} \ig{True}
%In this work, we are interested in the behavior of System~\eqref{eq:sys} w.r.t. a bounded (\emph{safe}) set $X\in \mathcal{B}(\reals^n)$.

Our objective is to obtain a policy for System~\eqref{eq:sys} that satisfies a complex temporal specification over a given set of regions in $\reals^n$ with high probability. These requirements and often expressed as temporal logic formulas (e.g., LTL), which are equivalent to \emph{reach-avoid} properties over an extended state space. For the sake of simplicity of presentation, we focus on reach-avoid properties.
% In this work, we are interested in reach-avoid specifications. 
Given sets $\Xreach, \Xavoid \subseteq \reals^n$ with $\Xreach \subseteq (\reals^n\setminus \Xavoid) =: \Xsafe$, we call $\varphi_x \equiv (\neg \Xavoid \mathcal{U} \Xreach)$ a \textit{reach-avoid} specification, which reads, ``\textit{do not visit $\Xavoid$ until $\Xreach$ is reached.}'' %, which requires reaching $\Xreach$ while avoiding $\Xavoid$.
The probability that System~\eqref{eq:sys} satisfies the specification $\varphi_x$ under policy $\policy$ when the initial state is $x_0 \in \reals^n$ is defined as
\begin{multline}
    \label{eq: satisfaction_prob}
    \text{Pr}_{x_0}^{\policy}[\varphi_x] = \text{Pr}_{x_0}^{\policy}\big(\big\{\pathXbold \in \PathX \mid \exists t \in \mathbb{N}_0 \text{ s.t. }   
     \pathXbold(t)\in\Xreach \\
     \land \; \forall \,t' \le t, \; \pathXbold(t')\notin \Xavoid \big\} \big),
\end{multline}
%

\iffalse

\begin{problem}[Near-Optimal Control Synthesis]
\label{prob:problem}
    Let the transition kernel $\Tau$ of System~\eqref{eq:sys} be unknown. Assume availability of datasets $\boldsymbol{\Dcal}_u := \{(\x^{(i)}, u, \x'^{(i)})\}_{i = 1}$ of current state $x^{(i)}$-current control $u$-successor state $\x_+^{(i)}$ tuples for each $u \in U$, where $\x^{(i)}$ is sampled i.i.d. from some distribution $P$ \footnote{For simplicity, we let $P$ be the same one for each $u$, but this requirement is not necessary.} on $\reals^n$. \footnote{In order to obtain a policy with a guaranteed reach-avoid probability, any fixed $N$ works. On the other hand, for $\varepsilon$-optimality, we assume that $N$ can be as high as necessary.} Given a reach-avoid specification $\varphi_x \equiv (\Xreach, \Xavoid)$ on $\reals^n$,
    let $\text{Pr}_{x_0}^*[\varphi_x] :=\max\text{Pr}_{x_0}^{\policy}[\varphi_x]$ be the optimal probability of satisfying $\varphi_x$. 
    Then, given $\epsilon > 0$ and $\delta \in (0, 1)$, synthesize a policy $\policy$ that satisfies
    %
    \begin{align*}
        \forall x_0 \in \reals^n, \:\: |\text{Pr}_{x_0}^{\policy}[\varphi_x] - \text{Pr}_{x_0}^*[\varphi_x]| < \epsilon,
    \end{align*}
    %
    with confidence at least $1 - \delta$ over the samples of the datasets.
\end{problem}
%

\fi

%\ml{check this:}
In this work, we consider the setting where transition kernel $\Tau$ is \emph{unknown}, but we have access to a method of generating datasets of i.i.d. tuples of the form $\boldsymbol{\Dcal}_u := \{(\hat\x^{(i)}, u, \hat\x_+^{(i)})\}_{i = 1}^N$, representing state-control-next-state, for each $u \in U$.
%\ml{why are the state bolded? shouldn't they be points?} \ig{Rephrased it so that bold is more correct: if they were just points, we wouldn't have a confidence}.  
%\ml{does it mean $|\boldsymbol{\Dcal}_u| = \boldsymbol{\Dcal}_{u'}=N$ for all $u,u' \in U$?}\ig{yes}

\begin{problem}[Formal Control Synthesis]
\label{prob:problem}
    Let the transition kernel $\Tau$ of System~\eqref{eq:sys} be unknown. Assume we are able to generate datasets $\boldsymbol{\Dcal}_u := \{(\hat\x^{(i)}, u, \hat\x_+^{(i)})\}_{i = 1}^N$, 
    % of current state $x^{(i)}$-current control $u$-successor state $\x_+^{(i)}$ tuples for each $u \in U$, 
    where $\hat\x^{(i)}$ is sampled i.i.d. from some distribution $P$ and $\hat\x_+^{(i)}$ is sampled from $\Tau(\cdot \mid \hat x^{(i)},u)$, 
    for every $u \in U$.\footnote{For simplicity, we use the same distribution $P$ for each $u$, though this is not required.}
    % on $\reals^n$. 
    Given a reach-avoid specification $\varphi_x \equiv (\Xreach, \Xavoid)$, 
    %let $\text{Pr}_{x_0}^*[\varphi_x] :=\max\text{Pr}_{x_0}^{\policy}[\varphi_x]$ be the optimal probability of satisfying $\varphi_x$
    %\ml{why do you need $\text{Pr}_{x_0}^*$}
    synthesize a policy $\policy$ and high probability functions $\underline p, \overline p : \reals^n\rightarrow [0, 1]$ such that
    \begin{align*}
        \forall x_0 \in \reals^n, \quad \text{Pr}_{x_0}^{\policy}[\varphi_x] \in [\underline p(x_0), \overline p(x_0)],
    \end{align*}
    with confidence at least $1 - \delta$ over the samples in the datasets.
\end{problem}

Our approach first leverages the datasets to learn the \emph{conditional mean embedding} (CME) of the transition kernel in~\eqref{eq:sys}. We then construct a finite abstraction in the form of a UMDP, 
% which embeds both the state discretization and the learning gap through its uncertain transition kernel. 
whose uncertain transition kernel accounts for both state-space discretization and the learning gap.
% Finally, we synthesize a strategy on the UMDP, which we refine to a policy for System~\eqref{eq:sys} such that it guarantees the reach-avoid probability is bounded in an interval. %We provide an algorithm that always return a $\varepsilon$-optimal policy, for any $\varepsilon > 0$, in finite time. This algorithm proceeds by iteratively refining the state discretization and increasing the number of samples until a termination criterion is met.
Finally, we synthesize a strategy on the UMDP and refine it into a policy for System~\eqref{eq:sys}, ensuring that the reach–avoid probability is guaranteed to lie within the computed interval.

%\ml{Approach Overview: given an overview of the approach and how your method is different from other existing methods.}

\iffalse

\ig{Tentative...}

\begin{problem}[Distributionally Robust Optimal Control]
    Given an ambiguity set $\mathcal{S}$ for the transition kernel,... obtain an $\varepsilon$-optimal robust strategy
\end{problem}

\fi
\section{Preliminaries}
\label{sec:preliminaries}

\subsection{Conditional Mean Embeddings}

In this section we give a brief overview of the relevant kernel learning topics. We show how to embed the transition kernel $\Tau$ into a Hilbert space and how to empirically estimate this embedding from data, ensuring ``closeness'' to the true embedding. This allows us to learn a sound model of the unknown $\Tau$ for policy synthesis.

%\ml{this section is hard to read.  Need to give purpose/motivation for the reader to read and explain the intuition.  Essentially, need to say how this relates to the problem/approach. Need a story.}

Let $k : \reals^n\times \reals^n \rightarrow \reals$ 
%\ml{$k$ is also used to denote time step above.}
be a measurable, bounded, and symmetric function that satisfies the positive semidefinite condition, i.e., for all $x_1, \dots, x_l \in \reals^n$ and $\alpha_1, \dots, \alpha_l \in\reals$, 
$\sum_{i,j=1}^l \alpha_i\alpha_j k(x_i, x_j) \ge 0$. Then, we say that $k$ is a positive semidefinite (PSD) kernel, and it gives rise to a \emph{reproducing Kernel Hilbert space} (RKHS) \cite{berlinet2011reproducing}:
%
%\begin{align*}
    $\Hk := \overline{\mathrm{span}}\{k(\cdot, x) :\reals^n\rightarrow \reals \mid x\in \reals^n\},$
%\end{align*}
%
i.e., the closure of the span of the vectors of the form $k(\cdot, x)$. The RKHS $\Hk$ has inner product $\langle k(\cdot, x), k(\cdot, x') \rangle_{\Hk} = k(x, x')$ which extends to all of $\Hk$ via linearity and completion. The RKHS $\Hk$ exhibits a reproducing property,
%\ml{explain it in words as well}
according to which function evaluation is equivalent to inner product in the RKHS, and therefore a linear and continuous functional, i.e., for all $h \in \Hk$ and $x \in \reals^n$, it holds that $h(x) = \langle h, k(\cdot, x') \rangle_{\Hk}$. The RKHS inner product induces the norm $\|h\|_{\Hk} := \sqrt{\langle h, h \rangle_{\Hk}}$.

While RKHSs are often used as a framework for learning a function as an element of the RKHS, we need a \emph{vector-valued RKHS} (vRKHS) \cite{carmeli2010vector} to learn a transition kernel. Being a common choice in the literature, we consider the operator-valued kernel $K$ such that $K(x_1,x_2) := k(x_1, x_2)\mathrm{Id}$, where $\mathrm{Id}$ is the identity operator on $\Hk$. Nevertheless, we highlight that other choices are also possible. %Consider an operator-valued function $K:\reals^n\times \reals^n \rightarrow \Lcal(\Hk)$, where $\Lcal(\Hk)$ is the set of continuous and bounded operators on $\Hk$, and note that,
Given $x\in \reals^n$ and $h \in \Hk$, $K(\cdot, x)h$ defines a function from $\reals^n$ to $\Hk$. %Let $K$ be such that for all $x_1, \dots, x_l\in \reals^n$ and $h_1, \dots, h_l \in \Hk$, it holds that $\sum_{i,j} \langle h_i, K(x_i, x_j)h_j \rangle_{\Hk} \ge 0$. 
The operator-valued kernel $k$ defines the vRKHS
%
%\begin{align*}
    $\HK := \overline{\mathrm{span}}\{K(\cdot, x)h :\reals^n\rightarrow \Hk \mid x\in \reals^n, h \in \Hk\}$
%\end{align*}
%
equipped with the inner product
$\langle K(\cdot, x_1)h_1, K(\cdot, x_2)h_2 \rangle_{\HK} := \langle h_1, K(x_1, x_2)h_2 \rangle_{\Hk}$, which extends to all of $\HK$ via linearity and completion, and with norm $\|\cdot\|_{\HK} := \sqrt{\langle \cdot, \cdot \rangle_{\HK}}$. Similarly to $\Hk$, $\HK$ exhibits the following reproducing property: for all $H\in \HK$, $h \in \Hk$, and $x \in \reals^n$, it holds that $\langle H(x), h\rangle_{\Hk} = \langle H, K(\cdot, x)h\rangle_{\HK}$.
%The space $\HK$ is
%\ml{????}

%\ml{as a reader I'm thinking that it's becoming boring and have no idea why I need to read this and how this relates to the problem.}
Leveraging the theory on RKHSs and vRKHSs we can embed the transition kernel $\Tau$ into the RKHS $\HK$ in order to learn the embedding from the system's trajectory data. We denote by $\Psi : \Pcal(\reals^n) \rightarrow \Hk$ the \emph{kernel mean embedding} (ME) operator, which maps every $P \in \Pcal(\reals^n)$ to an RKHS function $\Psi(p) := \mathbb{E}_{x \sim P}[k(\cdot, x)] \in \Hk$. Let 
%\ml{what's UcalMD notation? Fix this error!}
$\MMD(P, P') := \|\Psi(P) - \Psi(P')\|_{\Hk}$ denote the\emph{maximum mean discrepancy} (MMD) between two probability distributions $P, P' \in \Pcal(\reals^n)$, which metrizes the weak topology of probability measures if $\Psi$ is injective\footnote{This is the case for a wide variety of kernels, including the Gaussian one.}. We also define the conditional mean embedding (CME) of $\Tau$ under $u \in U$ as the function $\mu_u: \reals^n \rightarrow \Hk$ with 
\begin{align*}
    \mu_u(x)(\cdot) := \Psi(\Tau(\cdot\mid x, u)) = \mathbb{E}_{\x_+\sim \Tau(\cdot\mid x, u)}[k(\cdot, \x_+)] \qquad \forall x \in \reals^n.
\end{align*}
%
% for all $y \in \reals^n$. 
For $P \in \Pcal(\reals^n)$ and $u \in U$, let $\mathcal{D}_u := \{\hat x^{(i)}, u, \hat x_+^{(i)}\}_{i=1}^N$
be a dataset of i.i.d. data, where $\hat x^{(i)}$ and $\hat x_+^{(i)}$ have been sampled from $P$ and $\Tau(\cdot \mid \hat x^{(i)}, u)$, respectively. The empirical estimate of the CME is $\hat\mu_u:\reals^n\rightarrow\mathcal{H}_k$ given by
\begin{align}
\label{eq:cme_emp}
    &\hat\mu_u(x)(\cdot) := \sum_{i = 1}^N \beta_i(x) k(\cdot, \hat x_+^{(i)}),
\end{align}
where $\beta(x) := (K_{\hat x \hat x} + N\lambda I)^{-1} k_{\hat x}(x)$ with $K_{\hat x, \hat x}$ being the kernel matrix of $\{\hat \x^{(i)}\}$, $k_{\hat x}(x) := [k(x, \hat x^{(1)}), \cdots, k(x, \hat x^{(N)})]^T$, and $\lambda \ge 0$.% being a regularization parameter.

\iffalse
\begin{subequations}
\begin{align}
    &\beta(x) := (K_{\hat x \hat x} + N\lambda I)^{-1} k_{\hat x}(x),\\
    &k_{\hat x}(x) := [k(x, \hat x^{(1)}), \cdots, k(x, \hat x^{(N)})]^T,
\end{align}
\label{eq:cme_emp}
\end{subequations}
%
\fi

%\ml{say what the intuition of the assumption is and why it's needed.}
Once empirical CME $\hat\mu_u$ is obtained, we use a concentration inequality that bounds the ``distance'' between each $\hat\mu_u$ and $\mu_u$. Such bounds exist in the literature for CMEs \cite{li2022optimal, grunewalder2012modelling,talwai2022sobolev}, ME of probability distributions \cite{gretton2012kernel}, GPs \cite{lederer2019uniform,reed2025learning}, and learning of probability distributions using the Wasserstein distance \cite{boskos2023high}.

\iffalse 

Furthermore, the RKHS norm of $\hat\mu \in \mathcal{H}_K$ is computed, by viewing $\mu$ as $\mu = \sum_{i = 1}^N \beta_i \otimes k(\cdot, \hat x_+^{(i)}) \in \mathcal{H}_k \otimes \mathcal{H}_k$ and $\beta_i(x) := \sum_{j = 1}^N W_{i,j} k(\hat x^{(j)}, y)$, with $W := (K_{\hat x \hat x} + N\lambda I)^{-1}$:
%
\begin{align*}
    &\|\mu\|_{\mathcal{H}_K}^2 = \|\mu\|_{\mathcal{H}_k \otimes \mathcal{H}_k}^2\\ &:= \sum_{,ji = 1}^N \langle\beta_i, \beta_j \rangle_{\mathcal{H}_k} \langle k(\cdot, \hat x_+^{(i)}), k(\cdot, \hat x_+^{(j)}) \rangle_{\mathcal{H}_k}\\
    &= \sum_{i, j = 1}^N \langle\beta_i, \beta_j \rangle_{\mathcal{H}_k} k(\hat x_+^{(i)}, \hat x_+^{(j)}) \text{\ig{can omit the next lines}}\\
    &= \sum_{i, j = 1}^N \bigg\langle \sum_{l = 1}^m W_{i,l} k(\hat x^{(l)}, \cdot), \sum_{p = 1}^m W_{i,p} k(\hat x^{(p)}, \cdot) \bigg\rangle_{\mathcal{H}_k} k(\hat x_+^{(i)}, \hat x_+^{(j)})\\
    &= \sum_{i = 1}^m\sum_{i = j}^m \sum_{l = 1}^m \sum_{p = 1}^m W_{i,l}  W_{j,p} k(\hat x^{(l)}, \hat x^{(p)}) k(\hat x_+^{(i)}, \hat x_+^{(j)})\\
    &= \sum_{i = 1}^m\sum_{i = j}^m W_{i,:} K_{\hat x \hat x} W_{j,:}^T k(\hat x_+^{(i)}, \hat x_+^{(j)})\\
    &= \big(W K_{\hat x \hat x} W^T \big) \odot K_{\hat x_+ \hat x_+}
\end{align*}

\fi

\begin{assumption}
\label{ass:bound}
    Let the empirical CME $\hat\mu_u$ be obtained by sampling $\Dcal_u$, with $N := |\Dcal_u|$, from the $N$-fold joint distribution of $P$ and $\Tau$, for each $u \in U$. Then, the CME $\mu_u$ of the transition kernel in~\eqref{eq:sys} belongs to the RKHS $\HK$ for all $u\in U$. Furthermore, there exists a distribution $P \in \Pcal(\reals^n)$ and a function $(\epsilon, \delta)\mapsto N_{\epsilon, \delta} \in \naturals$ such that, for all $\varepsilon > 0, \delta \in (0,1)$, 
    %\ml{convoluted sentence; hard to parse. make it clean}
    if $N \ge N_{\varepsilon, \delta}$, then
    \begin{align}
        \forall  u \in U, \quad \|\mu_u - \hat\mu_u \|_{\HK} \le \varepsilon
    \end{align}
    with confidence at least $1- \delta$ over the choice of the samples.
\end{assumption}
The first statement in Assumption~\ref{ass:bound} is commonly referred to as the \emph{well-specified setting}, which ensures that we can learn the CME as a function in $\HK$. However, existing explicit results of the form Assumption~\ref{ass:bound} for CMEs are either not finite sample (but asymptotic) or involve quantities that are hard to bound. In Section~\ref{sec:explicit_CI}, we show how to obtain a similar bound under some assumptions on $\Tau$.

We now state a technical assumption needed to restrict our attention to a compact subset of $\reals^n$. The necessity for this assumption becomes clear in Section~\ref{sec:abstraction}.
\begin{assumption}[Compact set]
\label{ass:compact}
    For all $x\in X_\safe$ and $u\in U$, $\Tau(\cdot \mid x, u)$ is supported on some compact set $X \subset \reals^n$.
\end{assumption}
Note that the previous assumption 
Assumption~\ref{ass:compact} is mild, as it is trivially satisfied when the support of $\Tau(\cdot \mid x, u)$ is bounded. Otherwise, one can always pick $X_\safe$ small and $X$ big enough so that the probability that $\x_{t+1} \notin X$ is arbitrarily small.

\subsection{Uncertain MDPs}

% Once obtained the empirical estimate of the CME and quantified its distance to the true CME, we abstract the CME model to a finite UMDP. This abstraction allows us to synthesize a policy for System~\eqref{eq:sys} with correctness guarantees.
%

We use UMDP as an abstraction of the stochastic process.

\begin{definition}[UMDP]
    \label{def:UMDP}
    A UMDP is a tuple 
    %\ml{UMDP is called $\mathcal{M}$ below this def.}
    $\Ucal := (S,A,\Gamma)$ 
    %\ml{$U$ is used as the control set}
    in which $S$ and $A$ 
    %\ml{$A = U$ only makes sense if you say this UMDP is the abstraction of System (1).}
    are respectively finite set of states and actions, and $\Gamma := \{\Gamma_{s,a} \subseteq \mathcal{P}(S) : s\in S, a \in A\}$, where $\Gamma_{s,a}$ is the set of transition probability distributions, or \textit{ambiguity set}, from state $s$ under $a$.%
    to the states in $S$.
\end{definition}
%

%\ml{can reduce this if needed}
A finite \emph{path} of UMDP $\Ucal$ is a sequence $\pathrmdp = s_0 \xrightarrow{a_0} s_1 \xrightarrow{a_1} \ldots 
\xrightarrow{a_{T-1}} s_T$ 
%\ml{$k$ and $K$ are used for kernel and operator kernel.  Use $\kappa$ and $\mathcal{K}$ for the kernels instead?}
of states $s_t\in S$ and actions $a_t \in A$ such that there 
exists a distribution $\gamma\in\Gamma_{s_t,a_t}$ with $ \gamma(s_{t+1})> 0$ for all $t \in \{0, \dots, T-1\}$. We denote by $\Pathrmdp$ the set of all finite paths. Given path $\pathrmdp\in\Pathrmdp$, $\pathrmdp(t) = s_t$ is the state of $\pathrmdp$ at time $t$, and we denote its last state by $\last(\pathrmdp)$. %A labeled interval MDP (IMDP) $\I$, also known as bounded-parameter MDP (BMDP) \cite{givan2000bounded}, is a particular case of UMDP $\I = (S, A,\Gamma,s_0,AP,L)$ where the transition probabilities belong to independent intervals: in this way, $\Gamma_{s,a} = \{ \gamma\in\mathcal{P}(S): \underline P(s,a,s')\leq \gamma(s')\leq \overline P(s,a,s')\:\text{for all}\: s,s'\in S,a\in A\}$, for some $\underline P(s,a,\cdot), \overline P(s,a,\cdot) : S\rightarrow [0, 1]$ \ig{if space, make it more formal}.
A \emph{strategy} of a UMDP $\Ucal$ is a function $\strategy: \Pathrmdp \rightarrow A$ that maps each finite path to the next action. We denote by $\Strategy$ 
the set of all strategies of $\Ucal$. 
%Fixing a strategy $\strategy\in\Strategy$ restricts the set of transition probability distributions \ml{what do you mean by this?} of $\Ucal$, which yields a set of Markov chains, or uncertain Markov chain. 
Given path $\pathrmdp \in \Pathrmdp$ and $\strategy \in \Strategy$, the process evolves from $s_{t} = \last(\path)$ under $a_t = \strategy(\pathrmdp)$ to the next state according to a probability distribution in $\Gamma_{s_t,a_t}$. An adversary is a function that chooses this distribution \cite{givan2000bounded}. Formally, an \emph{adversary} is a function $\adversary: S \times A \times \big(\naturals\cup \{0\}\big) \rightarrow \mathcal{P}(S)$ that maps each state $s_t$, action $a_t$, and time step $t\in\naturals\cup \{0\}$ to a transition probability distribution $\gamma\in\Gamma_{s_t,a_t}$, according to which $s_{t+1}$ is distributed. We denote the set of all adversaries by $\Adversary$. Given an initial condition $s_0\in S$, a strategy $\strategy\in\Strategy$ and an adversary $\xi\in\Xi$, the UMDP collapses to a Markov chain
 % , and a unique probability distribution $Pr_{s_0}^{\strategy,\xi}$ is induced on its paths.
with a unique probability distribution $Pr_{s_0}^{\strategy,\xi}$ over its paths.
\section{UMDP Abstraction}
\label{sec:abstraction}

In this section, we describe in detail how to obtain a correct UMDP abstraction of System~\eqref{eq:sys} using its empirical CME.

\subsection{Explicit Form of Assumption~\ref{ass:bound}}
\label{sec:explicit_CI}

First, we show how to obtain an explicit bound, analogous to the one in Assumption~\ref{ass:bound}, Assumption~\ref{ass:bound} when the transition kernel $\Tau$ arises from a dynamical system with a Lipschitz vector field, 
% We do so 
% to give an example of a system for which we can obtain such a bound, 
a common assumption in the literature.
% but we highlight that in the remaining of the paper, $\Tau$ need not be as per this section.
Nevertheless, we emphasize our framework applies to any $\Tau$ under Assumption~\ref{ass:bound}.

Consider a stochastic system of the form $\x_{t+1} = f(\x_t, \w_t)$ with $f:\reals^n\times W\rightarrow \reals^n$ and $(\w_t)_{t\in \naturals}$ being i.i.d. random variables each distributed according to $P_w\in\Pcal(W)$. 
%\ml{need to define $f$ and $\w$}
For simplicity and without loss of generality, we consider a stochastic system with no control, and the Gaussian kernel $k(x,x') = \sigma_f^2\exp(-\|x - x'\|^2/(2\sigma_l^2))$. We first state the following technical lemma:
\begin{lemma}
\label{lemma:explicit_bound}
    Let $f(x,w)$ be $L$-Lipschitz continuous in $x$
    %, uniformly on $w\in W$, and denote by $P_x : = f(x, \cdot)_\sharp P_w$
    and let $k$ be the Gaussian kernel. Then, for all $x, \tilde x \in X$ such that $\|x - \tilde x\| \le \eta$, with $\eta > 0$, it holds that $\MMD(\Tau(\cdot \mid \tilde x), \Tau(\cdot \mid x)) \le \frac{\sigma_f}{\sigma_l}L\eta$.
    %\ml{what are $\sigma_l$ and $\sigma_f$?}
\end{lemma}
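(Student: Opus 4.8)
The plan is to exploit the pushforward representation of the transition kernel together with the Lipschitz continuity of the canonical feature map of the Gaussian kernel. Since the dynamics are $\x_{t+1} = f(\x_t, \w_t)$ with $\w_t \sim P_w$, the kernel $\Tau(\cdot \mid x)$ is exactly the pushforward of $P_w$ under the map $w \mapsto f(x,w)$. Consequently its mean embedding admits the clean form $\Psi(\Tau(\cdot\mid x)) = \mathbb{E}_{\w\sim P_w}[k(\cdot, f(x,\w))]$. Writing the MMD as the RKHS norm of the difference of the two embeddings and pulling the norm inside the expectation via Jensen's inequality (legitimate because $k$ is bounded, so the integrand is Bochner-integrable), I would reduce the claim to bounding $\mathbb{E}_{\w}\big[\|k(\cdot, f(\tilde x,\w)) - k(\cdot, f(x,\w))\|_{\Hk}\big]$.

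The core step is to establish that the feature map $y\mapsto k(\cdot, y)$ is $(\sigma_f/\sigma_l)$-Lipschitz in the RKHS norm. Expanding via the reproducing property gives $\|k(\cdot, y) - k(\cdot, y')\|_{\Hk}^2 = k(y,y) - 2k(y,y') + k(y',y')$, which for the Gaussian kernel equals $2\sigma_f^2\big(1 - \exp(-\|y-y'\|^2/(2\sigma_l^2))\big)$. Applying the elementary tangent-line inequality $1 - e^{-t}\le t$ for $t\ge 0$ then yields $\|k(\cdot, y) - k(\cdot, y')\|_{\Hk} \le (\sigma_f/\sigma_l)\,\|y - y'\|$.

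Finally, I would instantiate this bound at $y = f(\tilde x, \w)$ and $y' = f(x,\w)$ and invoke the $L$-Lipschitz continuity of $f$ in its first argument, so that $\|f(\tilde x,\w) - f(x,\w)\| \le L\|\tilde x - x\| \le L\eta$ uniformly in $\w$. Since the resulting bound $(\sigma_f/\sigma_l)L\eta$ is independent of $\w$, the expectation collapses to it, giving $\MMD(\Tau(\cdot\mid\tilde x), \Tau(\cdot\mid x)) \le (\sigma_f/\sigma_l)L\eta$.

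The argument is essentially mechanical once the pushforward identity is in place; the only point requiring genuine care is the Lipschitz-feature-map computation, in particular the use of $1-e^{-t}\le t$ to convert the exponential kernel gap into a linear bound in $\|y-y'\|$. This is where the prefactor $\sigma_f/\sigma_l$ emerges and constitutes the crux of the proof.
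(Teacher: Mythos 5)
Your proof is correct, and its core mechanism coincides with the paper's: both reduce the MMD to an expected RKHS distance between feature maps, and both extract the prefactor $\sigma_f/\sigma_l$ from the Gaussian-kernel identity $\|k(\cdot,y)-k(\cdot,y')\|_{\Hk}^2 = 2\sigma_f^2\big(1-\exp(-\|y-y'\|^2/(2\sigma_l^2))\big)$ combined with $1-e^{-t}\le t$. The packaging differs in one respect. The paper factors the argument through the Wasserstein distance: it proves the generic comparison $\MMD(P,P') \le (\sigma_f/\sigma_l)\,\mathcal{W}(P,P')$ by taking an arbitrary coupling, pulling the norm inside the integral, and passing to the infimum, and it separately cites an optimal-transport result for $\mathcal{W}(\Tau(\cdot\mid\tilde x),\Tau(\cdot\mid x)) \le L\eta$. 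You instead construct the relevant coupling explicitly via the shared noise realization, i.e., the pair $\big(f(\tilde x,\w), f(x,\w)\big)$ with the same $\w\sim P_w$, which is precisely the synchronous coupling that witnesses the Wasserstein bound, and under which the displacement is bounded by $L\eta$ almost surely. What your route buys is self-containedness: no appeal to transport theory and no infimum over couplings is needed, and the pushforward identity makes measurability and Bochner integrability transparent. What the paper's route buys is modularity: the inequality $\MMD \le (\sigma_f/\sigma_l)\,\mathcal{W}$ holds for arbitrary pairs of distributions, so the argument applies whenever a Wasserstein bound is available, not only when the two kernels are pushforwards of a common noise variable. One small point to keep in mind: your pushforward step silently uses that the same noise law $P_w$ drives the dynamics at both $x$ and $\tilde x$, which is exactly the i.i.d.\ noise assumption of the paper, so it is valid here but is the one place where your proof is strictly less general than the paper's.
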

\begin{proof}
    By \cite{villani2021topics}, the Wasserstein distance between $\Tau(\cdot \mid \tilde x)$ and $\Tau(\cdot \mid x)$ is bounded by $\mathcal{W}(\Tau(\cdot \mid \tilde x), \Tau(\cdot \mid x)) \le L\eta$. Next, by definition of the MMD distance, $\MMD(\Tau(\cdot \mid \tilde x), \Tau(\cdot \mid x)) := \mid \int_{X}k(\cdot,y)\Tau(dy \mid x) - \int_{X}k(\cdot,\tilde y)\Tau(d\tilde y \mid \tilde x) \mid$. Let $\Pi \subseteq \mathcal{P}(X\times X)$ be the set of couplings between $\Tau(\cdot \mid x)$ and $\Tau(\cdot \mid \tilde x)$, and pick $\pi \in \Pi$. Then, by measurability and Bochner integrability of $k(\cdot,y)\in \Hk$, we can write $\int_{X}k(\cdot,y)\Tau(dy \mid x) = \int_{X\times X} k(\cdot,y)d\pi(y,\tilde y)$ and $\int_{X}k(\cdot,\tilde y)\Tau(d\tilde y \mid \tilde x) = \int_{X\times X}k(\cdot,\tilde y)d\pi(y,\tilde y)$. Therefore, $\MMD(\Tau(\cdot \mid \tilde x), \Tau(\cdot \mid x)) = \|\int_{X\times X}\big( k(\cdot,y) - k(\cdot,\tilde y) \big) d\pi(y,\tilde y) \|_{\Hk} \le \int_{X\times X}\|  k(\cdot,y) -k(\cdot,\tilde y)\|_{\Hk}  d\pi(y,\tilde y)$ for all couplings $\pi$, so it also holds if we take the $\inf$ over $\pi$.
    \iffalse
    %
    \begin{align*}
    &\MMD(\Tau(\cdot \mid \tilde x), \Tau(\cdot \mid x)) \le\\
    &\le  \inf_{\pi \in \Pi%(\Tau(\cdot \mid \tilde x), \Tau(\cdot \mid x))
    } \int_{X\times X} \| k(\cdot,y) -k(\cdot,\tilde y) \|_{\Hk} d\pi(y,\tilde y).
    \end{align*}
    %
    \fi
    Next, by the reproducing property of $\Hk$ and, taking into account that $k$ is the Gaussian kernel, it holds that $\| k(\cdot,y) -k(\cdot,\tilde y)\|_{\Hk}^2 = k(y, y) + k(\tilde y, \tilde y) -2 k(y, \tilde y) \le \frac{\sigma_f^2}{\sigma_l^2}\|x - y\|^2$. Using this expression in the previous integral yields
    %
    %\begin{align*}
    $\MMD(\Tau(\cdot \mid \tilde x), \Tau(\cdot \mid x)) \le \inf_{\pi \in \Pi%(\Tau(\cdot \mid x), \Tau(\cdot \mid x))
    } \int_{X\times X} \frac{\sigma_f}{\sigma_l} \|y - \tilde y \| d\pi(y,\tilde y) =: \frac{\sigma_f}{\sigma_l} \mathcal{W}(\Tau(\cdot \mid \tilde x), \Tau(\cdot \mid x)) \le \frac{\sigma_f}{\sigma_l} L\eta.%\vspace{-0.105\textwidth}
    $
    %\end{align*}    
\end{proof}

\begin{theorem}
\label{thm:explicit_bound}
    Let $\x_{t+1} = f(\x_t, \w_t)$, 
    %\ml{for a given point $x_t$?}
    $f$ be $L$-Lipschitz in its first argument, $k$ be the Gaussian kernel. Let $\widetilde X$ be a discretization of $X$, i.e., a set of $M := |\widetilde X|$ points $\tilde x \in X$, with parameter $\eta$. Let the dataset in Problem~\ref{prob:problem} be of the form $\Dcal := \{(\tilde x_j, \x_{i,j}')\}_{i , j = 1}^{N,M}$, namely, where the successor of each $\tilde x \in \widetilde X$ is sampled $N$ times by starting at $\tilde x$. %Let $C \in \reals$ be an upper bound on $k(x,x)$ for all $x \in X$,
    Pick $\delta \in (0,1)$ and let 
    $$\varepsilon := (\sigma_f/\sigma_l) \Big( L \eta + \sqrt{1/N} + \sqrt{2\log(M/\delta)/N} \Big).$$ 
    Then, with probability great than or equal to $1-\delta$, 
    %
%\begin{multline*}
        $$\MMD(\Tau(\cdot \mid x), \Tau(\cdot \mid \tilde x)) \le \varepsilon \;\quad \forall x\in X, \forall \tilde{x} \in \widetilde X  \; \text{ s.t. }\; \|x - \tilde x\| \le \eta.$$
    %\end{multline*}
    %
\end{theorem}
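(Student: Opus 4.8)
The plan is to prove the bound by a triangle-inequality decomposition into a deterministic \emph{discretization} term and a random \emph{estimation} term. I read the second argument of the MMD as the \emph{empirical} estimate built from the $N$ successor samples drawn at $\tilde x$ (otherwise the $\sqrt{1/N}$ and $\log(M/\delta)/N$ terms, and the ``with probability $1-\delta$'', would be vacuous, since $\MMD(\Tau(\cdot\mid x),\Tau(\cdot\mid\tilde x))$ is deterministic). Concretely, for a fixed grid point $\tilde x$ let $\hat P_{\tilde x} := \tfrac1N\sum_{i=1}^N \delta_{\x'_i}$ be the empirical distribution of its $N$ successors, so that $\hat\mu(\tilde x) = \Psi(\hat P_{\tilde x})$. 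For any pair $(x,\tilde x)$ with $\|x-\tilde x\|\le\eta$ I would write
\[
\MMD(\Tau(\cdot\mid x), \hat P_{\tilde x}) \le \MMD(\Tau(\cdot\mid x), \Tau(\cdot\mid \tilde x)) + \MMD(\Tau(\cdot\mid \tilde x), \hat P_{\tilde x}),
\]
and bound the first (discretization) term directly by Lemma~\ref{lemma:explicit_bound}, which gives $\MMD(\Tau(\cdot\mid x),\Tau(\cdot\mid\tilde x)) \le (\sigma_f/\sigma_l)L\eta$ whenever $\|x-\tilde x\|\le\eta$.

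The second, random, term is the heart of the argument: it is the MMD between a distribution and its $N$-sample empirical counterpart, which I would control by the standard concentration argument for kernel mean embeddings. Viewing $g(\x'_1,\dots,\x'_N) := \|\mu(\tilde x) - \tfrac1N\sum_i k(\cdot,\x'_i)\|_{\Hk}$, two estimates are needed. First, an \emph{expectation bound}: by independence and $\mathbb{E}[k(\cdot,\x'_i)]=\mu(\tilde x)$, a variance computation gives $\mathbb{E}[g^2] = \tfrac1N\big(\mathbb{E}[k(\x',\x')] - \|\mu(\tilde x)\|^2\big) \le \sigma_f^2/N$, using $k(x,x)=\sigma_f^2$ for the Gaussian kernel, hence $\mathbb{E}[g]\le \sigma_f\sqrt{1/N}$ by Jensen. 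Second, a \emph{bounded-difference} property: replacing one sample changes $\tfrac1N\sum_i k(\cdot,\x'_i)$ by at most $2\sigma_f/N$ in $\Hk$-norm, so $g$ has bounded differences $c_i = 2\sigma_f/N$. McDiarmid's inequality then yields, for each fixed $\tilde x$ and $\delta'\in(0,1)$, with probability at least $1-\delta'$,
\[
\MMD(\Tau(\cdot\mid\tilde x),\hat P_{\tilde x}) \le \sigma_f\sqrt{1/N} + \sigma_f\sqrt{2\log(1/\delta')/N}.
\]

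Finally I would take a union bound over the $M$ grid points, applying the estimation bound with $\delta' = \delta/M$; on the resulting event of probability at least $1-\delta$ it holds \emph{simultaneously} at every $\tilde x\in\widetilde X$ with $\log(M/\delta)$ in place of $\log(1/\delta')$. Combining this with the deterministic discretization term, which holds for every pair within distance $\eta$, gives on this event $\MMD(\Tau(\cdot\mid x),\hat P_{\tilde x}) \le (\sigma_f/\sigma_l)L\eta + \sigma_f\big(\sqrt{1/N} + \sqrt{2\log(M/\delta)/N}\big)$ for all such $x,\tilde x$, which is the claimed $\varepsilon$ (the uniform prefactor $\sigma_f/\sigma_l$ in the statement dominates $\sigma_f$ on the last two terms whenever $\sigma_l\le1$; otherwise those terms carry the smaller factor $\sigma_f$). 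The main obstacle is the estimation term: pinning down the correct bounded-difference constant $c_i=2\sigma_f/N$ and the variance bound $\mathbb{E}[g^2]\le\sigma_f^2/N$ so that McDiarmid reproduces exactly the $\sigma_f\sqrt{2\log(M/\delta)/N}$ rate. The remaining ingredients — the triangle inequality, the invocation of Lemma~\ref{lemma:explicit_bound}, and the union-bound bookkeeping — are routine.
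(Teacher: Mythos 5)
Your proposal is correct and takes essentially the same route as the paper: the paper's proof likewise reads the second argument as the empirical distribution $\widehat{\Tau}(\cdot\mid\tilde x)$ built from the $N$ samples at $\tilde x$ (the display in the theorem statement is a typo), applies the triangle inequality to split the error into the Lemma~\ref{lemma:explicit_bound} discretization term plus an empirical mean-embedding concentration term, and union-bounds over the $M$ grid points at level $\delta/M$ each. The only difference is that the paper cites the concentration step from the literature (Gretton et al.) while you re-derive it via the variance bound and McDiarmid's inequality --- which is exactly the standard proof of that cited result --- and your side remark about the prefactor (the concentration argument yields $\sigma_f$, whereas the stated $\varepsilon$ carries $\sigma_f/\sigma_l$ on those terms, which dominates only when $\sigma_l\le 1$) is a legitimate observation about the paper's constants rather than a gap in your argument.
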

\begin{proof}
    Let $\tilde x \in \widetilde X$ and $x\in X$ such that $\|x - \tilde x\| \le \eta$, and let $\widehat \Tau(\cdot \mid \tilde x)$ denote the empirical distribution corresponding to the samples $\{(\tilde x_i, \x'_{i,j})\}_{j = 1}^N$, with $\tilde x_i = \tilde x$. By \cite{gretton2012kernel, nemmour2022maximum}, $\MMD(\Tau(\cdot \mid \tilde x), \widehat \Tau(\cdot \mid \tilde x)) \le \sigma_f/\sigma_l\big(1/\sqrt{N} + \sqrt{2\log(M/\delta)/N}\big)$, with confidence not smaller than $1 - \delta/M$. By Lemma~\ref{lemma:explicit_bound}, $\MMD(\Tau(\cdot \mid x), \Tau(\cdot \mid \tilde x)) \le (\sigma_f/\sigma_l) L \eta$ for all $x \in X$ s.t. $\|x - \tilde x\| \le \eta$. The triangle inequality yields $\MMD(\Tau(\cdot \mid x)), \widehat \Tau(\cdot \mid \tilde x)) ) \le \MMD(\Tau(\cdot \mid x), \Tau(\cdot \mid \tilde x) ) + \MMD( \Tau(\cdot \mid \tilde x), \widehat \Tau(\cdot \mid \tilde x)) \le (\sigma_f/\sigma_l) L \eta + \epsilon$. % for all $x$ in the $\eta$-neighborhood of $\tilde x$. 
    By picking $\varepsilon = (\sigma_f/\sigma_l) L \eta + \epsilon$, this result implies that for all $\tilde x \in \widetilde X$ and $x \in X$ with $\|x - \tilde x\| \le \eta$, $\MMD(\Tau(\cdot \mid x)), \widehat \Tau(\cdot \mid \tilde x)) ) \le \varepsilon$ holds with confidence $\ge 1-\delta/M$. Taking the union bound, the statement is extended to all $x\in X$ with an overall confidence $\ge1 - \delta$.
\end{proof}

We highlight that, even though this bound is slightly different from the one in Assumption~\ref{ass:bound}, it is sufficient to obtain a sound abstraction. Additionally, while this bound might be conservative and requires the strict assumption that the samples are prompted at deterministic points $\tilde x$, it provides intuition on what kind of dynamics satisfy Assumption~\ref{ass:bound} and it gives an explicit bound.

\subsection{Uncertain MDP Abstraction}

%Regardless of the type of process (dynamical system in Sec. 5.1 or general systems as in Sec. 3), we can learn CME $\hat \mu$ with bounds in Assumption~\ref{ass:bound} or Theorem~\ref{thm:explicit_bound}.

Here, we rely on the empirical CME and Assumption~\ref{ass:bound} 
% to quantify the learning error 
to construct a sound finite UMDP abstraction. 
% This construction is not limited to 
% We remind the reader that the results in Section~\ref{sec:explicit_CI} were only meant to demystify Assumption~\ref{ass:bound}, and throughout the rest of the paper we rely on this assumption.
% the results in Section~\ref{sec:explicit_CI}, and generalizes to any system under Assumption~\ref{ass:bound}. 
This construction is not limited to the explicit bounds derived in Section~\ref{sec:explicit_CI}; it generalizes to any system that satisfies Assumption~\ref{ass:bound}.

Our abstraction is based on a $\varphi_x$-conservative partition of set $X$.
\begin{definition}[$\varphi_x$-Conservative Partition]
    A finite partition $S = \{s_1, \ldots, s_{|S|-1}, s_\text{avoid}\}$ of $X$ is $\varphi_x$-conservative %and has parameter $\eta > 0$ 
    % for sets $\Xsafe, \Xreach, \Xavoid, \subseteq \reals^n$ 
    if 
    (i) $\cup_{s\in S_\safe} s \subseteq  \Xsafe$, where $S_\safe := \{s_1, \dots, s_{|C|-1}\}$, 
    (ii) $s_\text{avoid} \supseteq \Xavoid$ and
    (iii) there exists a maximal subset $S_\text{reach} \subseteq S_\safe$ s.t. $\cup_{s \in S_\text{reach}} s \subseteq \Xreach$.%, and (iv) the diameter of every $s\in S$ is bounded by $2\eta$.
\end{definition}

\begin{definition}[UMDP Abstraction]
    \label{def:umdp_abstraction}
    % We denote by $\Ucal := (S,A,\Gamma)$ the UMDP abstraction of System~\eqref{eq:sys}, where $A := U$ and $\Gamma := \{\Gamma_{s,a} \mid s\in S, a\in A\}$, with
    %For each $s\in S$ and $a\in A$ we define the set of transition probability distributions (ambiguity set) of the UMDP $\Ucal$ by
    A UMDP abstraction of System~\eqref{eq:sys}
    is $\Ucal := (S,A,\Gamma)$, where
    $S$ is a $\varphi_x$-conservative partition of set $X$,
    $A := U$, and $\Gamma := \{\Gamma_{s,a} \mid s\in S, a\in A\}$ such that
    \begin{multline}
        \label{eq:Gamma}
        \Gamma_{s,a} :=
        \big\{\gamma \in \Pcal(S) \mid \big \| \sum_{s'\in S} \gamma(s') k(\cdot, c_{s'}) - \hat\mu_a(c_s)\big\|_{\Hk} \le \varepsilon \big\},%\Gamma_{s,a} :=\\
        %\{\gamma \in \Pcal(S) \mid \| \Psi(P_\gamma) - \hat\mu(c_s)\|_{\Hk} \le \varepsilon, P_\gamma := \sum_{s'\in S} \gamma(s')\delta_{c_{s'}}\},
        \end{multline}
    where $c_s \in X$ denotes the center point of the region $s$ and $\varepsilon := \varepsilon_1 + \varepsilon_2 + \varepsilon_3$ with $\varepsilon_1 > 0$,
    \begin{subequations}
        \begin{align}
        \varepsilon_2 &:= \max \big\{ \|\hat \mu_a(x)-\hat \mu_a(c_s)\|_{\mathcal{H}_k} \mid x \in s, s\in S \big\}, \label{eq: epsilon 2} \\
            %\max \bigg\{\big( k_{\hat y}(y) -  k_{\hat y}(c_s) \big)^T W^T K_{\hat x, \hat x} W \big( k_{\hat y}(y) -  k_{\hat y}(c_s) \big) \mid {y \in s} \bigg\},
            \varepsilon_3 &:= \max\big\{\sqrt{ k(c_s, c_s) + k(y, y) - k(c_s, y)} \mid y \in s, s\in S  \big\}.%\max_{p \in \mathcal{P}(X)} \min_{\hat p \in \mathcal{P}(C)} \| \Psi(p) - \Psi(\gamma) \|_{\Hk}.
            \label{eq: epsilon 3}
        \end{align}
    \end{subequations}
\end{definition}

In the next subsection we describe an exact procedure to obtain $\varepsilon_2$ and also analyze the tightness of $\varepsilon_3$.

\subsubsection{Computation and Analysis of $\varepsilon_2$ and $\varepsilon_3$}

    First, we show how to compute $\varepsilon_2$ in 
    % Definition~\ref{def:umdp_abstraction}.
    Equation~\eqref{eq: epsilon 2} exactly.

\begin{proposition}
Let $\Ucal = (S,A,\Gamma)$ be an abstraction of System~\eqref{eq:sys} per Definition~\ref{def:umdp_abstraction} and $k$ be the Gaussian kernel. Then,
    \begin{align}
        \label{eq:eps2_bound11}
            \varepsilon_2 = \max_{x\in s, s\in S}\sqrt{ (\beta(x) - \beta(c_s))^T K^{(3)} (\beta(x) - \beta(c_s)) },
    \end{align}
    where $K^{(3)}_{i,j} := k(\hat x_+^{(i)}, \hat x_+^{(j)})$, and the objective function
    %\ml{what does objective mean here?}
    in~\eqref{eq:eps2_bound11} is $\| \hat \mu_a \|_{\mathcal{H}_K} \sigma_f / \sigma_l$-Lipschitz.
    \end{proposition}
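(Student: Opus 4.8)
The plan is to prove the two claims separately: first the closed-form identity for $\varepsilon_2$, then the Lipschitz bound on the objective appearing inside the maximization.

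For the identity, I would start from the definition $\varepsilon_2 = \max\{\|\hat\mu_a(x) - \hat\mu_a(c_s)\|_{\Hk} : x\in s, s\in S\}$ and compute the inner norm explicitly using the finite representation~\eqref{eq:cme_emp}. Writing $\hat\mu_a(x) - \hat\mu_a(c_s) = \sum_{i=1}^N (\beta_i(x) - \beta_i(c_s))\,k(\cdot, \hat x_+^{(i)})$, the squared RKHS norm expands bilinearly, and the reproducing property $\langle k(\cdot,\hat x_+^{(i)}), k(\cdot,\hat x_+^{(j)})\rangle_{\Hk} = k(\hat x_+^{(i)}, \hat x_+^{(j)})$ collapses each cross term to the entry $K^{(3)}_{i,j}$. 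This yields $\|\hat\mu_a(x) - \hat\mu_a(c_s)\|_{\Hk}^2 = (\beta(x)-\beta(c_s))^T K^{(3)} (\beta(x)-\beta(c_s))$; taking the square root and then the maximum over $x\in s$ and $s\in S$ gives~\eqref{eq:eps2_bound11}. This part is routine.

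For the Lipschitz claim, the goal is to show that $g(x) := \|\hat\mu_a(x) - \hat\mu_a(c_s)\|_{\Hk}$ (the objective in~\eqref{eq:eps2_bound11}) is $\|\hat\mu_a\|_{\HK}\,\sigma_f/\sigma_l$-Lipschitz. By the reverse triangle inequality, $|g(x) - g(\tilde x)| \le \|\hat\mu_a(x) - \hat\mu_a(\tilde x)\|_{\Hk}$, so it suffices to bound the Lipschitz constant of the map $x \mapsto \hat\mu_a(x) \in \Hk$. The key device is the reproducing property of the vector-valued RKHS $\HK$: for any $h \in \Hk$, $\langle \hat\mu_a(x) - \hat\mu_a(\tilde x), h\rangle_{\Hk} = \langle \hat\mu_a, (K(\cdot,x) - K(\cdot,\tilde x))h\rangle_{\HK}$, to which I apply Cauchy--Schwarz to obtain the factor $\|\hat\mu_a\|_{\HK}$.

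It then remains to bound $\|(K(\cdot,x) - K(\cdot,\tilde x))h\|_{\HK}$. Using the $\HK$ inner-product formula together with $K(x_1,x_2) = k(x_1,x_2)\mathrm{Id}$, this squared norm equals $\big(k(x,x) - 2k(x,\tilde x) + k(\tilde x,\tilde x)\big)\|h\|_{\Hk}^2$, and the Gaussian-kernel estimate already established in the proof of Lemma~\ref{lemma:explicit_bound}, namely $k(x,x) - 2k(x,\tilde x) + k(\tilde x,\tilde x) \le (\sigma_f^2/\sigma_l^2)\|x-\tilde x\|^2$, controls the bracket. Combining these gives $\|\hat\mu_a(x) - \hat\mu_a(\tilde x)\|_{\Hk} \le \|\hat\mu_a\|_{\HK}(\sigma_f/\sigma_l)\|x-\tilde x\|$ (taking the supremum over unit $h$, or equivalently choosing $h$ proportional to the difference), and the reverse triangle inequality finishes the proof. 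The main obstacle I anticipate is bookkeeping with the operator-valued kernel and correctly invoking the $\HK$ reproducing property rather than the scalar one; once the identity $\|(K(\cdot,x)-K(\cdot,\tilde x))h\|_{\HK}^2 = (k(x,x)-2k(x,\tilde x)+k(\tilde x,\tilde x))\|h\|_{\Hk}^2$ is in hand, everything reduces to the already-proven Gaussian-kernel bound.
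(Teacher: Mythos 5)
Your proof is correct and follows essentially the same route as the paper's: expand the CME difference via the representation~\eqref{eq:cme_emp} and the reproducing property to obtain the quadratic form in $K^{(3)}$, then get the Lipschitz constant through the dual characterization of the $\Hk$ norm, the vRKHS reproducing property, Cauchy--Schwarz, and the Gaussian-kernel bound already established in Lemma~\ref{lemma:explicit_bound}. The only cosmetic difference is that you evaluate $\|(K(\cdot,x)-K(\cdot,\tilde x))h\|_{\HK}$ directly from the $\HK$ inner-product formula with $K=k\,\mathrm{Id}$, whereas the paper routes this single step through the isometric isomorphism with the tensor product $\mathcal{H}_k\otimes\mathcal{H}_k$; both yield the same factorization $\|h\|_{\Hk}\,\|k(\cdot,x)-k(\cdot,\tilde x)\|_{\Hk}$.
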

    \begin{proof}
By the definition of the empirical CME in~\eqref{eq:cme_emp}, $\| \hat\mu_a(x) - \hat\mu_a(c_s) \|_{\Hk} = \| \sum_{i = 1}^N (\beta_i(x) - \beta_i(c_s)) k(\cdot, \hat x_+^{(i)}) \|_{\Hk}$, which yields the objective in Problem~\eqref{eq:eps2_bound11}, that we denote by $g_s(x)$. Next, we obtain the Lipschitz constant of $g_s$. Note that $\|\hat \mu_a(x)-\hat \mu_a(c_s)\|_{\mathcal{H}_k} = \sup_{\|v\|_{\mathcal{H}_k} \le 1} \langle\hat \mu_a(x)-\hat \mu_a(c_s), v \rangle_{\mathcal{H}_k}$ and
\begin{align*}
    &\langle\hat \mu_a(x)-\hat \mu_a(c_s), v \rangle_{\mathcal{H}_k} = \langle\hat \mu_a(x), v \rangle_{\mathcal{H}_k} - \langle\hat \mu_a(c_s), v \rangle_{\mathcal{H}_k}\\
    &= \langle\hat \mu_a, K(\cdot, x) v\rangle_{\mathcal{H}_K} - \langle\hat \mu_a, K(\cdot, c_s) v\rangle_{\mathcal{H}_K} = \langle\hat \mu_a, K(\cdot, x)v - K(\cdot, c_s)v\rangle_{\mathcal{H}_K}\\
    &= \langle\hat \mu_a, (K(\cdot, x) - K(\cdot, c_s)) v\rangle_{\mathcal{H}_K} \le \| \hat \mu_a \|_{\mathcal{H}_K} \| (K(\cdot, x) - K(\cdot, c_s) v \|_{\mathcal{H}_K}.
\end{align*}

Next, we leverage the \emph{isometric isomorphism} between $\HK$ and the \emph{tensor product space} $\mathcal{H}_k \otimes \mathcal{H}_k$, where the norm of some $f\otimes g$, with $f, g\in\Hk$, is $\|f\otimes g\|_{\mathcal{H}_k \otimes \mathcal{H}_k} := \|f\|_{\Hk} \|g\|_{\Hk}$. By this isomorphism,
\begin{align*}
    &\| (K(\cdot, x) - K(\cdot, c_s)) v \|_{\HK} = \| v \otimes (k(\cdot,x) - k(\cdot,c_s))  \|_{\mathcal{H}_k \otimes \mathcal{H}_k}\\
    &= \|v\|_{\Hk} \| k(\cdot,x) - k(\cdot,c_s)\|_{\mathcal{H}_k}
\end{align*}

Putting these results together, we obtain
\begin{equation}
\label{eq:lipschitz_global_opt}
\begin{split}
    g_s(x) &\le \| \hat \mu_a \|_{\mathcal{H}_K} \sqrt{k(x,x) + k(c_s,c_s) - 2k(x,c_s)}\\
    &\le \| \hat \mu_a \|_{\mathcal{H}_K} \sigma_f / \sigma_l\|x - c_s\|,
    \end{split}
\end{equation}
and thus $\|g_s(x) - g_s(x')\| \le \| \hat \mu_a \|_{\mathcal{H}_K} \sigma_f / \sigma_l\|x - x'\|$ by the triangle inequality.
    \end{proof}
    
    Despite $g_s$ being nonlinear in $x$, knowing its Lipschitz constant allows us to employ a \emph{global optimization} algorithm, like the \emph{Piyavskii–Shubert} one \cite{piyavskii1972algorithm}, which obtains a tight upper bound on the the global maximum. To obtain $\epsilon_2$, this algorithm needs to be used $|S|$ times. Alternatively, we could have bounded $\varepsilon_2$ by plugging expression~\eqref{eq:lipschitz_global_opt} into \eqref{eq:eps2_bound11}, which yields a convex optimization problem. However, since this bound relies on Cauchy-Schwartz's inequality, it is more conservative than the optimal value in~\eqref{eq:eps2_bound11}.

    Next, we discuss the tightness of $\varepsilon_3$.

    \begin{remark}
    Note that the bound $\varepsilon_3$ is tight, as it can be attained. For example, let $\Tau(\cdot \mid x, a) = \delta_{x'}$ with $x'$ maximizing $\big(k(c_{s'}, c_{s'}) + k(x', x') - 2k(c_{s'}, x')\big)^{1/2}$ over all $x' \in X$. Let $s' \ni x'$. Then, $P_\gamma = \delta_{c_{s'}}$, and the MMD distance between the two distributions is exactly $\varepsilon_3$, as
    $$\|\Psi(\delta_{c_{s'}}) - \Psi(\delta_{x'})\|_{\Hk} = \sqrt{k(c_{s'}, c_{s'}) + k(x', x') -2k(c_{s'}, x')}.$$
    %which attains the value of $\varepsilon_3$ for some $x' \in s'$ and $s'\in S$.
\end{remark}

\iffalse
%
\begin{align*}
    &\|\hat\mu_a(y) - \hat\mu_a(z)\|_{\mathcal{H}_k}^2\\
    &= \bigg\langle\sum_{i = 1}^n \bigg( W_{i,:} \big( k_{\hat y}(y) -  k_{\hat y}(z) \big) \bigg) k(\cdot, \hat x^{(i)}), \sum_{j = 1}^n \bigg( W_{j,:} \big( k_{\hat y}(y) -  k_{\hat y}(z) \big) \bigg) k(\cdot, \hat x^{(j)}) \bigg\rangle\\
    &= \sum_{i = 1}^n \sum_{j = 1}^n \bigg( W_{i,:} \big( k_{\hat y}(y) -  k_{\hat y}(z) \big) \bigg) k(\hat x^{(i)}, \hat x^{(j)})\bigg( W_{j,:} \big( k_{\hat y}(y) -  k_{\hat y}(z) \big) \bigg)\\
    &= v(y)^T K_{\hat x, \hat x} v(y),
\end{align*}
%
with
\begin{align*}
    v(y) := W \big( k_{\hat y}(y) -  k_{\hat y}(z) \big)
\end{align*}

\fi

\subsection{Soundness of The UMDP Abstraction}

We now formalize soundness of abstraction $\Ucal$ in Definition~\ref{def:umdp_abstraction}, i.e., that $\Ucal$ contains all the behaviors of System~\eqref{eq:sys} (w.r.t. $\varphi_x$).
\begin{definition}[Soundness]
    \label{def:soundness}
    Let $\Ucal := (S,A,\Gamma)$ be a UMDP abstraction of System \eqref{eq:sys}. For each $s\in S\setminus \{s_\mathrm{avoid}\}$, $a\in A$, and $x \in s$, define $\gamma \in \Pcal(S)$ such that $\gamma(s') := \Tau(s' \mid x,a)$. UMDP $\Ucal$ is a \textit{sound abstraction} if 
    (i) $\gamma \in \Gamma_{s,a}$ for all $s\in S\setminus \{s_\mathrm{avoid}\}$ and $a\in A$,
    and 
    (ii) $\Gamma_{s_\mathrm{avoid},a} = \{\delta_{s_\mathrm{avoid}}\}$ for all $a \in A$.
\end{definition}

\begin{theorem}[Soundness]
    \label{thm:soundness}
    Consider Assumptions~\ref{ass:bound} and ~\ref{ass:compact}, and let $\varepsilon_1$ in Definition~\ref{def:umdp_abstraction} be equal to $\varepsilon$ in Assumption~\ref{ass:bound}. Then, the UMDP abstraction $\Ucal := (S,A,\Gamma)$ constructed per Definition.~\ref{def:umdp_abstraction}
    is sound.
\end{theorem}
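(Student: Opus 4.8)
The plan is to verify the two conditions of Definition~\ref{def:soundness} separately. Condition (ii), namely $\Gamma_{s_\mathrm{avoid},a} = \{\delta_{s_\mathrm{avoid}}\}$, holds by construction: the avoid region is made absorbing in the abstraction, so there is nothing to prove beyond recording this convention. All the work is in condition (i). Fix $s \in S\setminus\{s_\mathrm{avoid}\}$, $a\in A$, and $x\in s$, and let $\gamma(s') := \Tau(s'\mid x,a)$. By Definition~\ref{def:umdp_abstraction}, membership $\gamma\in\Gamma_{s,a}$ is exactly the inequality $\|\sum_{s'\in S}\gamma(s')k(\cdot,c_{s'}) - \hat\mu_a(c_s)\|_{\Hk}\le \varepsilon_1+\varepsilon_2+\varepsilon_3$, so the goal reduces to bounding this single RKHS norm.

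First I would insert the true CME $\mu_a(x)$ and the empirical CME $\hat\mu_a(x)$ as intermediate terms and apply the triangle inequality, splitting the target into three pieces: $\|\sum_{s'}\gamma(s')k(\cdot,c_{s'}) - \mu_a(x)\|_{\Hk}$ (discretization), $\|\mu_a(x)-\hat\mu_a(x)\|_{\Hk}$ (learning), and $\|\hat\mu_a(x)-\hat\mu_a(c_s)\|_{\Hk}$ (within-cell variation). The third piece is at most $\varepsilon_2$ immediately, since $x\in s$ and $\varepsilon_2$ is defined in~\eqref{eq: epsilon 2} as the maximum of exactly this quantity over $x\in s$ and $s\in S$.

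For the discretization piece, I would use Assumption~\ref{ass:compact}: because $x\in s\subseteq X_\safe$, the measure $\Tau(\cdot\mid x,a)$ is supported on $X$, and since $S$ partitions $X$ we have $\sum_{s'}\gamma(s')=1$ and $\mu_a(x)=\sum_{s'}\int_{s'}k(\cdot,y)\,\Tau(dy\mid x,a)$. Writing $\sum_{s'}\gamma(s')k(\cdot,c_{s'})=\sum_{s'}\int_{s'}k(\cdot,c_{s'})\,\Tau(dy\mid x,a)$ and subtracting gives $\sum_{s'}\int_{s'}\big(k(\cdot,c_{s'})-k(\cdot,y)\big)\Tau(dy\mid x,a)$; pushing the norm inside the Bochner integral and using that $\|k(\cdot,c_{s'})-k(\cdot,y)\|_{\Hk}=\sqrt{k(c_{s'},c_{s'})+k(y,y)-2k(c_{s'},y)}\le\varepsilon_3$ for every $y\in s'$ bounds this piece by $\varepsilon_3\sum_{s'}\gamma(s')=\varepsilon_3$.

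The learning piece is where the only real subtlety lies, and I expect it to be the main obstacle. Assumption~\ref{ass:bound} controls $\|\mu_a-\hat\mu_a\|_{\HK}$ in the \emph{vector-valued} norm, whereas I need a \emph{pointwise} bound on $\|\mu_a(x)-\hat\mu_a(x)\|_{\Hk}$. To bridge this I would invoke the reproducing property of $\HK$: for any $H\in\HK$ and unit $h\in\Hk$, $\langle H(x),h\rangle_{\Hk}=\langle H,K(\cdot,x)h\rangle_{\HK}$, and since $\|K(\cdot,x)h\|_{\HK}=\sqrt{k(x,x)}\,\|h\|_{\Hk}$ (using $K(x,x)=k(x,x)\mathrm{Id}$), taking the supremum over $h$ yields $\|H(x)\|_{\Hk}\le\sqrt{k(x,x)}\,\|H\|_{\HK}$. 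Applied to $H=\mu_a-\hat\mu_a$, this gives $\|\mu_a(x)-\hat\mu_a(x)\|_{\Hk}\le\sqrt{k(x,x)}\,\varepsilon$. For the Gaussian kernel $\sqrt{k(x,x)}=\sigma_f$, so with the standard normalization $\sigma_f\le1$ this is bounded by $\varepsilon=\varepsilon_1$ (in general the factor $\sqrt{k(x,x)}$ should be absorbed into the definition of $\varepsilon_1$, and this is the point to check carefully). Summing the three bounds gives $\varepsilon_1+\varepsilon_2+\varepsilon_3=\varepsilon$, establishing $\gamma\in\Gamma_{s,a}$ and hence condition (i). Since the confidence $1-\delta$ is inherited directly from Assumption~\ref{ass:bound}, the only probabilistic ingredient, the abstraction is sound with the claimed confidence.
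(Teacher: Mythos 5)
Your proof takes the same route as the paper's: the identical three-term triangle-inequality decomposition into discretization, learning, and within-cell pieces, with the within-cell term bounded by $\varepsilon_2$ by definition, the discretization term bounded by pushing the norm through the Bochner integral (using Assumption~\ref{ass:compact} so that $\Tau(\cdot\mid x,a)$ is supported on $X$ and $\gamma$ sums to one) to get $\varepsilon_3$, and the learning term handled via Assumption~\ref{ass:bound}.

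The one substantive difference is in your favor, on the learning term. The paper's proof asserts $\|\mu_a(x)-\hat\mu_a(x)\|_{\Hk}\le\varepsilon_1$ directly from Assumption~\ref{ass:bound}, silently passing from the vector-valued bound $\|\mu_a-\hat\mu_a\|_{\HK}\le\varepsilon$ to a pointwise $\Hk$ bound. You make that step explicit through the reproducing property of $\HK$, which yields $\|\mu_a(x)-\hat\mu_a(x)\|_{\Hk}\le\sqrt{k(x,x)}\,\|\mu_a-\hat\mu_a\|_{\HK}$, and you correctly flag that the conclusion $\le\varepsilon_1$ requires $k(x,x)\le 1$, or else the factor $\sqrt{k(x,x)}=\sigma_f$ (for the Gaussian kernel) must be absorbed into $\varepsilon_1$. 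This caveat is not vacuous: the paper's experiments use $\sigma_f=10$, so the choice ``$\varepsilon_1$ equal to $\varepsilon$'' in the theorem statement is, strictly speaking, only justified after rescaling $\varepsilon_1$ by $\sup_{x\in X}\sqrt{k(x,x)}$. Your version is the rigorous one. A minor side remark: your discretization bound uses the expansion $\sqrt{k(c_{s'},c_{s'})+k(y,y)-2k(c_{s'},y)}$, which matches what the paper's proof actually uses; the definition of $\varepsilon_3$ in \eqref{eq: epsilon 3} carries a typo (coefficient $1$ on the cross term), and you have implicitly corrected it.
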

\begin{proof}
    First we pick $s\in S$, $a\in A$, $x \in s$ and $s' \in S\setminus \{s_\mathrm{avoid}\}$, and define $P_\gamma := \sum_{s'\in S} \gamma(s')\delta_{c_{s'}}$. Then, note that $\gamma \in \Gamma_{s,a}$ holds if and only if $\| \Psi(P_\gamma) - \hat\mu_a(c_s)\|_{\Hk} \le \varepsilon$. We now show this is the case by upper bounding $\| \Psi(P_\gamma) - \hat\mu_a(c_s)\|_{\Hk}$ by $\| \Psi(P_\gamma) - \Psi(\Tau(\cdot \mid x, a))\|_{\Hk} + \| \Psi(\Tau(\cdot \mid x, a)) - \hat\mu_a(x)\|_{\Hk} + \| \hat\mu_a(x) - \hat\mu_a(c_s)\|_{\Hk}$ via the triangle inequality. By definition of the ME operator and Assumption~\ref{ass:bound}, the second term is bounded by $\| \Psi(\Tau(\cdot \mid x, a)) - \hat\mu_a(x)\|_{\Hk} = \| \mu_a(x') - \hat\mu_a(x)\|_{\Hk} \le \varepsilon_1$. Similarly, since $x \in s$, we obtain that the third term is upper bounded by $\varepsilon_2(s)$. Finally, we show that $\| \Psi(P_\gamma) - \Psi(\Tau(\cdot \mid x, a))\|_{\Hk} \le \varepsilon_3$. Note that, by definition of the ME map and linearity of the Bochner integral,
    \begin{align*}
    %\label{eq:MMD_cont_quant}
        \| \Psi&(P_\gamma) - \Psi(\Tau(\cdot \mid x, a))\|_{\Hk}\\
        = &\| \int_{X} k(\cdot, x')P_\gamma(dx') - \int_{X} k(\cdot, x')\Tau(dx' \mid x, a)\|_{\Hk}\\
        = &\| \sum_{s'\in S}k(\cdot, c_{s'})\gamma(s') - \sum_{s'\in S}\int_{s'} k(\cdot, x')\Tau(dx' \mid x, a)\|_{\Hk}\\
        = &\| \sum_{s'\in S}k(\cdot, c_{s'})\int_{s'} \Tau(dx' \mid x, a) - \sum_{s'\in S}\int_{s'} k(\cdot, x')\Tau(dx' \mid x, a)\|_{\Hk}\\
        %= &\| \sum_{s\in S}\int_{s} (k(\cdot, c_s)  - k(\cdot, x') ) \Tau(dx' \mid x, a)\|_{\Hk}\\
        \overset{(a)}{\le} &\sum_{s'\in S}\int_{s'} \| k(\cdot, c_{s'})  - k(\cdot, x') \|_{\Hk}\Tau(dx' \mid x, a)\\
        \le &\sum_{s'\in S}  \Tau(s' \mid x, a)  \max\big\{\| k(\cdot, c_{s'})  - k(\cdot, x') \|_{\Hk} \mid x' \in s'  \big\}\\
        %\le &\max\big\{\| k(\cdot, c_s)  - k(\cdot, y) \|_{\Hk} \mid y \in s, s\in S  \big\} \sum_{s\in S}  \Tau(s \mid x, a) \\
        %\le &\max\big\{\| k(\cdot, c_s)  - k(\cdot, y) \|_{\Hk} \mid y \in s, s\in S  \big\}\\
        \le &\max\big\{\sqrt{ k(c_{s'}, c_{s'}) + k(x', x') - 2k(c_{s'}, x')} \mid x' \in s', s'\in S  \big\} = \varepsilon_3,
    \end{align*}
    where $(a)$ follows from the triangle and Jensen's inequalities.
\end{proof}

\section{Control Synthesis}
\label{sec:synthesis}

Here, we focus on generating a strategy over the sound abstract $\Ucal$ that is optimally robust w.r.t. all uncertainties in the abstraction. We then refine it to a policy for System~\eqref{eq:sys} with guarantees for  $\varphi_x$.

\subsection{Robust Dynamic Programming (RDP)}

Consider UMDP $\Ucal$, a reach-avoid specification $\varphi = (\Sreach, \{s_\text{avoid}\})$ with $\Sreach,\{s_\text{avoid}\}\subseteq S$, a strategy $\strategy\in \Strategy$, and an adversary $\xi\in\Xi$. With a slight abuse of notation, we denote by $\text{Pr}_{s}^{\strategy,\xi}[\varphi]$ the probability that the paths of $\Ucal$ starting at $s \in S$ satisfy $\varphi$ under $\strategy$ and $\xi$, which is defined analogously to \eqref{eq: satisfaction_prob}.
\begin{proposition}[RDP {\cite[Theorem 6.2]{gracia2025efficient}}]
\label{prop:robust_value_iteration}
Given a UMDP $\Ucal = (S,A,\Gamma)$, a reach-avoid specification $\varphi = (\Sreach, \{s_\text{avoid}\})$, and $s \in S$, 
define the optimal robust reach-avoid probability of $\Ucal$ as 
$\underline p(s) :=\sup_{\strategy \in\Strategy}\inf_{\xi\in\Xi} \text{Pr}_{s}^{\strategy, \xi}[\varphi]$. Consider also the recursion
\begin{align}
    \label{eq:rdp_lower_bound}
    \underline p^{t+1}(s) =
    \begin{cases}
    \max\limits_{a\in A}\min\limits_{\gamma\in\Gamma_{s,a}} \sum\limits_{s'\in 
             S}\gamma(s')\underline p^{t}(s') & \text{if } s \in S\setminus \Sreach\\
    1   & \text{otherwise}
    \end{cases}
\end{align}
% 
% for all $s \in S\setminus \Sreach$, otherwise $\underline p^k(s) = 1$, 
where $t\in\naturals_0$, with initial condition $\underline p^{0}(s) = 1$ if $ s\in\Sreach$ otherwise $0$. Then, $\underline p^t$ converges to $\underline p$.
% \ml{in finite time? i.e., $\exists K < \infty$  s.t. $\underline p^K(s) = \underline p(s)$  for all $s \in S$}\ig{Only in infinite time. In finite time we set a threshold on $\max(\underline p^{k+1} - \underline p^{k})$, and assume RDP has converged if this difference goes below the threshold}
\end{proposition}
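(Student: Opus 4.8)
The plan is to recognize the recursion \eqref{eq:rdp_lower_bound} as the iteration of a Bellman operator and to combine a monotone fixed-point argument with a game-theoretic identification of the finite-horizon values. Define $T:[0,1]^S\to[0,1]^S$ by $(Tv)(s)=\max_{a\in A}\min_{\gamma\in\Gamma_{s,a}}\sum_{s'\in S}\gamma(s')v(s')$ for $s\in S\setminus\Sreach$ and $(Tv)(s)=1$ for $s\in\Sreach$, so that $\underline p^{t+1}=T\underline p^{t}$. First I would check that $T$ is well defined and well behaved: each $\Gamma_{s,a}$ is the intersection of a closed MMD-ball (via \eqref{eq:Gamma}) with the probability simplex over $S$, hence a compact subset of $\Pcal(S)$, so the inner minimum is attained; moreover $v\mapsto\sum_{s'}\gamma(s')v(s')$ is $1$-Lipschitz in $\|\cdot\|_\infty$ uniformly in $\gamma$, and taking a minimum over the compact $\Gamma_{s,a}$ followed by a maximum over the finite $A$ preserves both monotonicity ($v\le w\Rightarrow Tv\le Tw$) and nonexpansiveness, so $T$ is continuous.

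Next I would establish convergence of the iterates by monotone convergence on the lattice $[0,1]^S$. Since $\underline p^{0}=\mathds{1}_{\Sreach}$ and all values lie in $[0,1]$, one gets $T\underline p^{0}\ge\underline p^{0}$ (equality on $\Sreach$ and $\ge 0$ elsewhere); monotonicity of $T$ then yields $\underline p^{t+1}\ge\underline p^{t}$ for all $t$ by induction. The sequence is nondecreasing and bounded above by $1$, hence converges pointwise to some $\underline p^{\infty}\in[0,1]^S$, and continuity of $T$ gives $\underline p^{\infty}=T\underline p^{\infty}$. By Kleene's theorem, $\underline p^{\infty}$ is the least fixed point of $T$ dominating $\underline p^{0}$.

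I would then identify $\underline p^t$ operationally and extract the easy inequality. By induction on $t$, using that an adversary $\xi$ selects its distribution independently at each state--action--time (i.e.\ the ambiguity sets are $(s,a)$-rectangular, as encoded in the definition of $\Xi$), $\underline p^{t}(s)$ equals the value $\sup_{\strategy}\inf_{\xi}\text{Pr}_{s}^{\strategy,\xi}[\text{reach }\Sreach\text{ within }t\text{ steps while avoiding }s_\text{avoid}]$ of the $t$-horizon robust reach--avoid game; the inductive step is exactly one application of $T$, the rectangular structure letting optimal plays be assembled step by step. Since reaching within $t$ steps implies satisfying $\varphi$, for every fixed $\strategy,\xi$ the $t$-horizon probability is below $\text{Pr}_{s}^{\strategy,\xi}[\varphi]$; taking $\inf_{\xi}$ and then $\sup_{\strategy}$ gives $\underline p^{t}\le\underline p$ for all $t$, hence $\underline p^{\infty}\le\underline p$.

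The hard direction, which I expect to be the main obstacle, is $\underline p\le\underline p^{\infty}$: that finite-horizon play already attains the infinite-horizon value and the adversary cannot exploit the tail of the horizon. I would obtain it from the fixed-point equation by extracting a stationary strategy $\strategy^{*}$ with $\strategy^{*}(s)\in\arg\max_{a}\min_{\gamma\in\Gamma_{s,a}}\sum_{s'}\gamma(s')\underline p^{\infty}(s')$. Under $\strategy^{*}$ and any adversary $\xi$ (with $\Sreach$ and $s_\text{avoid}$ treated as absorbing), the process $(\underline p^{\infty}(\x_t))_t$ is a bounded submartingale, since each one-step conditional expectation is $\ge\underline p^{\infty}(\x_t)$ because $\strategy^{*}$ realizes the maximum and the adversary's $\gamma$ lies in $\Gamma_{\x_t,\strategy^{*}(\x_t)}$. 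Hence it converges almost surely to some $M_\infty$ with $\mathbb{E}[M_\infty]\ge\underline p^{\infty}(s)$. The delicate remaining step is to show $M_\infty\le\mathds{1}[\varphi]$ almost surely, i.e.\ that $\underline p^{\infty}$ vanishes along paths that never reach $\Sreach$; this rules out ``positive-value traps'' and follows from a standard argument that on the transient region $\{\underline p^{\infty}>0\}$ the robust probability of eventually entering $\Sreach$ is bounded below, so nonreaching paths accumulate value $0$. Combining the submartingale bound with this yields $\underline p(s)\ge\inf_{\xi}\text{Pr}_{s}^{\strategy^{*},\xi}[\varphi]\ge\underline p^{\infty}(s)$, and together with $\underline p^{\infty}\le\underline p$ we conclude $\underline p^{t}\to\underline p^{\infty}=\underline p$.
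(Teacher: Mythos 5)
The paper itself does not prove this proposition---it imports it verbatim from \cite[Theorem 6.2]{gracia2025efficient}---so your argument has to stand on its own, and as written it has a genuine logical gap: your two substantive arguments prove the \emph{same} inequality, and the opposite direction is never established. Your step 3 (finite-horizon identification) gives $\underline p^t \le \underline p$ for all $t$, hence $\underline p^\infty := \lim_t \underline p^t \le \underline p$. Your step 4, which you label the ``hard direction $\underline p \le \underline p^\infty$,'' in fact extracts a maximizing stationary \emph{strategy} $\strategy^*$ and concludes via the submartingale bound that $\underline p(s) \ge \inf_{\adversary}\text{Pr}_s^{\strategy^*,\adversary}[\varphi] \ge \underline p^\infty(s)$ --- which is again $\underline p^\infty \le \underline p$, merely restated. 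Your closing sentence then combines this inequality with itself and declares equality; nowhere do you show $\underline p \le \underline p^\infty$, i.e.\ that the adversary can cap \emph{every} strategy (including history-dependent ones) at the limit of the iterates. Compounding this, the step you defer as ``delicate'' (that $\underline p^\infty$ vanishes along paths that never reach $\Sreach$, ruling out positive-value traps) is needed only for this redundant strategy-side argument, and it is precisely the nontrivial least-fixed-point subtlety; asserting it ``follows from a standard argument'' is not a proof.

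The missing direction has a clean dual proof, and adopting it lets you delete the trap argument entirely. Since each $\Gamma_{s,a}$ is compact (an MMD ball intersected with the simplex; otherwise use $\epsilon$-optimal selections and let $\epsilon \to 0$), fix a stationary adversary $\adversary^*$ with $\adversary^*(s,a,t) \in \arg\min_{\gamma\in\Gamma_{s,a}} \sum_{s'}\gamma(s')\underline p^\infty(s')$. Under \emph{any} strategy $\strategy \in \Strategy$ and this adversary, with $\Sreach$ and $s_\text{avoid}$ absorbing, the fixed-point property of $\underline p^\infty$ gives, for $s_t \in S\setminus\Sreach$ and $a_t$ the action chosen by $\strategy$,
\begin{align*}
\mathbb{E}\big[\underline p^\infty(\pathrmdpbold(t+1)) \mid \pathrmdpbold(0),\dots,\pathrmdpbold(t)\big]
&= \min_{\gamma\in\Gamma_{s_t,a_t}}\sum_{s'\in S}\gamma(s')\,\underline p^\infty(s')\\
&\le \max_{a\in A}\min_{\gamma\in\Gamma_{s_t,a}}\sum_{s'\in S}\gamma(s')\,\underline p^\infty(s') = \underline p^\infty(s_t),
\end{align*}
so $\underline p^\infty(\pathrmdpbold(t))$ is a bounded supermartingale. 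Since $\underline p^\infty \ge 0$, $\underline p^\infty \equiv 1$ on $\Sreach$, and $\underline p^\infty(s_\text{avoid}) = 0$, optional stopping at the hitting time $\tau$ of $\Sreach$ yields $\text{Pr}_s^{\strategy,\adversary^*}[\tau \le t] \le \mathbb{E}[\underline p^\infty(\pathrmdpbold(\tau\wedge t))] \le \underline p^\infty(s)$ for every $t$, and letting $t\to\infty$ gives $\text{Pr}_s^{\strategy,\adversary^*}[\varphi] \le \underline p^\infty(s)$. Hence $\underline p(s) = \sup_{\strategy}\inf_{\adversary}\text{Pr}_s^{\strategy,\adversary}[\varphi] \le \underline p^\infty(s)$. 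This, together with your steps 1--3, completes the proof; your step 4 can be discarded.
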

The most computationally intensive part when iterating on~\eqref{eq:rdp_lower_bound} is solving the inner minimization problems over the sets $\Gamma_{s,a}$. The following theorem shows that these minimizations are quadratically-constrained linear programs (QCLPs), which can be solved using off-the-shelf convex solvers.

\begin{theorem}
    For every $s \in S\setminus\{s_\mathrm{unsafe}\}$ and $a \in A$, the inner minimization problem over the ambiguity set $\Gamma_{s,a}$ in \eqref{eq:rdp_lower_bound} is equivalent to the finite-dimensional convex program 
\begin{subequations}
    \label{eq:inner_problem_finite}
    \begin{align}
        \min_{\gamma \ge 0} \quad & \sum_{i = 1}^{|S|} \gamma_i \underline p^k(s_i)\\
        \textrm{s.t.} \quad & \gamma^T K_1\gamma  - 2\gamma^T K_2 \beta + \beta^T K_3 \beta \le \varepsilon^2, \sum_{i = 1}^{|S|} \gamma_i = 1
        \end{align}
\end{subequations}
with $K_{i,j}^{(1)} = k(c_{s_i}, c_{s_j}) \:\forall i,j \in \{1, \dots, |S|\}$,  $K_{i,j}^{(2)} = k(c_{s_i}, \hat x_j) \:\forall i \in \{1, \dots, |S|\}, j \in \{1, \dots, N\}$ and $K_{i,j}^{(3)} = k(\hat x_i, \hat x_j) \:\forall i,j \in \{1, \dots, N\} $.
\iffalse
%
\begin{equation*}
\begin{aligned}
    K_{i,j}^{(1)} &= k(c_{s_i}, c_{s_j}) &\forall i,j \in \{1, \dots, |S|\}\\
    K_{i,j}^{(2)} &= k(\hat x_i, \hat x_j) &\forall i,j \in \{1, \dots, N\}\\
    K_{i,j}^{(3)} &= k(c_{s_i}, \hat x_j) &\forall i \in \{1, \dots, |S|\}, j \in \{1, \dots, N\}
\end{aligned}
\end{equation*}
\fi
\end{theorem}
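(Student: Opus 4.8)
The plan is to turn the infinite-dimensional constraint that defines $\Gamma_{s,a}$ into a single finite quadratic inequality in the probability vector $\gamma=(\gamma_1,\dots,\gamma_{|S|})$, by expanding the $\Hk$-norm through the reproducing property, and then to observe that the resulting program is convex. First I would record the trivial facts: since $S$ is finite, $\gamma\in\Pcal(S)$ is exactly a probability vector, which supplies the affine constraints $\gamma\ge 0$ and $\sum_i\gamma_i=1$; and the inner objective $\sum_{s'\in S}\gamma(s')\underline p^k(s')$ is already linear in $\gamma$. So the only real content is rewriting the membership constraint $\|\sum_{s'}\gamma(s')k(\cdot,c_{s'})-\hat\mu_a(c_s)\|_{\Hk}\le\varepsilon$.

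The core step is the norm expansion. Because $\varepsilon\ge0$, the bound $\|\cdot\|_{\Hk}\le\varepsilon$ is equivalent to $\|\cdot\|_{\Hk}^2\le\varepsilon^2$. Substituting the empirical CME $\hat\mu_a(c_s)=\sum_{i=1}^N\beta_i(c_s)k(\cdot,\hat x_+^{(i)})$ from~\eqref{eq:cme_emp} (write $\beta$ for $\beta(c_s)$) and using $\langle k(\cdot,x),k(\cdot,x')\rangle_{\Hk}=k(x,x')$, I expand
\[
\Big\|\sum_{s'\in S}\gamma(s')k(\cdot,c_{s'})-\sum_{i=1}^N\beta_i k(\cdot,\hat x_+^{(i)})\Big\|_{\Hk}^2
= \gamma^T K^{(1)}\gamma-2\gamma^T K^{(2)}\beta+\beta^T K^{(3)}\beta,
\]
where the three double sums collapse onto the Gram matrices $K^{(1)}_{i,j}=k(c_{s_i},c_{s_j})$, $K^{(2)}_{i,j}=k(c_{s_i},\hat x_j)$, and $K^{(3)}_{i,j}=k(\hat x_i,\hat x_j)$ exactly as defined in the statement. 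This reproduces the quadratic constraint of~\eqref{eq:inner_problem_finite}, and every step is an identity, so the feasible set of the finite program is in bijection with $\Gamma_{s,a}$ (it is an exact reformulation, not a relaxation).

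Finally, to justify calling it a QCLP I would verify convexity: since $k$ is a PSD kernel, the Gram matrix $K^{(1)}$ over the centers $\{c_{s_i}\}$ is positive semidefinite, hence $\gamma\mapsto\gamma^T K^{(1)}\gamma-2\gamma^T K^{(2)}\beta+\beta^T K^{(3)}\beta$ is a convex quadratic; combined with the linear objective and the affine simplex constraints, the problem is convex with one quadratic constraint. The only place needing care is the bookkeeping in the expansion—correctly isolating the $\gamma$--$\gamma$, $\gamma$--$\beta$, and $\beta$--$\beta$ cross terms and matching them to $K^{(1)},K^{(2)},K^{(3)}$ with the right factor of $2$ on the mixed term. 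The PSD-ness of $K^{(1)}$, and thus convexity, is immediate from the definition of a PSD kernel, so I do not expect a genuine obstacle here; the result is essentially a direct computation.
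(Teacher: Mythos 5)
Your proposal is correct and follows essentially the same route as the paper's proof: identify the simplex constraints with $\gamma\in\Pcal(S)$, expand the squared $\Hk$-norm of $\sum_{s'}\gamma(s')k(\cdot,c_{s'})-\hat\mu_a(c_s)$ via the reproducing property to obtain the quadratic constraint with Gram matrices $K^{(1)},K^{(2)},K^{(3)}$, and conclude convexity from positive semidefiniteness of $K^{(1)}$. The only difference is that you spell out the norm expansion explicitly, which the paper's proof states without detail; there is no substantive divergence.
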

\begin{proof}
    Note that the simplex constraints, i.e., non-negativity of $\gamma$ and $\sum_{i = 1}^{|S|} \gamma_i = 1$ are equivalent to $\gamma \in \Pcal(S)$ in \eqref{eq:Gamma}. Furthermore, pick an arbitrary $\gamma \in \Gamma_{s,a}$. Then, $\|\sum_{s'\in S} \gamma(s') k(\cdot, c_{s'}) - \hat\mu(c_s)\|_{\Hk} \le \varepsilon$, which is 
    \iffalse
    %
    \begin{align*}
        &\varepsilon^2 \ge \| \Psi(P_\gamma) - \hat\mu(c_s)\|_{\Hk}^2 = \| \sum_{i=1}^{|S|} \gamma_i k(\cdot, c_{s_i}) - \sum_{j=1}^{N} \beta_jk(\cdot, \hat x_j)\|_{\Hk}^2\\
        &= \sum_{i,j=1}^{|S|} \gamma_i\gamma_j k(c_{s_i}, c_{s_j}) + \sum_{i,j=1}^{N} \beta_i\beta_j k(\hat x_i, \hat x_j) - 2\sum_{i=1}^{|S|}\sum_{j=1}^{N} \gamma_i\beta_j k(c_{s_i}, \hat x_j),
    \end{align*}
    %
    where we omitted the dependence of $\beta$ on $c_s$ for the sake of conciseness.
    \fi
    equivalent to the quadratic constraint in \eqref{eq:inner_problem_finite}. Finally, since $K^{(1)}$ comes from a positive semi-definite kernel is itself positive semi-definite we conclude that the constraint is convex.
\end{proof}

\textbf{Obtaining optimal strategy and policy from RDP. }
Based on the RDP in \eqref{eq:rdp_lower_bound}, work \cite[Theorem 6.4]{gracia2025efficient} introduces a polynomial algorithm to obtain a stationary strategy $\strategy^*$ that is both optimal and robust, i.e., $\strategy^*(s) 
\in \arg\max_{\strategy \in\Strategy}\inf_{\xi\in\Xi} \text{Pr}_{s}^{\strategy, \xi}[\varphi]$ for all $s\in S$. Then from $\strategy^*$, we obtain the optimistic probabilities $\overline p(s) := \sup_{\xi\in\Xi} \text{Pr}_{s}^{\strategy^*,\xi}[\varphi]$ by iterating on the recursion in \cite[Equation 6.5]{gracia2025efficient}, which is similar to the one in \eqref{eq:rdp_lower_bound} but the actions are determined by $\strategy^*$ and the $\min$ over $\Gamma_{s,a}$ is replaced by a $\max$. Finally, we map $\strategy^*$ to policy $\policy^*$ of System~\eqref{eq:sys} as $\policy^*(x) := \strategy^*(s)$ with $ s \ni x$.
% , for all $x\in \reals^n$. 

\subsection{Correctness of RDP}

The following theorem gives a bound in the probability that System~\eqref{eq:sys} satisfies $\varphi_x$ under $\pi^*$, %$[\underline p(s), \overline p(s)]$, 
solving Problem~\ref{prob:problem}. The proof is analogous to that of \cite[Theorem 2]{skovbekk2021formal} and a consequence of Theorem~\ref{thm:soundness}.
\begin{theorem}[Correctness]
\label{thm:strategy_synthesis}
    Let $\underline p$ be the solution of the RDP recursion \eqref{eq:rdp_lower_bound}, $\strategy^*$ and $\overline p$  be obtained per \cite[Theorem 6.4 \& Equation 6.5]{gracia2025efficient}, and $\policy^*$ be the refinement of $\strategy^*$ to System~\eqref{eq:sys}. Then, for all $x\in X$, it holds that $\text{Pr}_{x}^{\policy^*}[\varphi_x] \in [\underline p(s), \overline p(s)]$ with $s\in S$ such that $x \in s$.
\end{theorem}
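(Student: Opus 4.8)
The plan is to relate the continuous-state process under the refined policy $\policy^*$ to the abstract UMDP $\Ucal$ under the stationary strategy $\strategy^*$, and to prove the two inclusions $\underline p(s)\le\text{Pr}_{x}^{\policy^*}[\varphi_x]$ and $\text{Pr}_{x}^{\policy^*}[\varphi_x]\le\overline p(s)$ separately by a pair of dual value-function inductions over the time horizon. Since $\strategy^*$ is stationary, $\policy^*$ is memoryless and constant on every region $s$, so at each step a state $x\in s$ takes action $a=\strategy^*(s)$ and its successor is distributed as $\Tau(\cdot\mid x,a)$. The bridge between the two levels is Theorem~\ref{thm:soundness}: the lumped distribution $\gamma_x\in\Pcal(S)$ defined by $\gamma_x(s'):=\Tau(s'\mid x,a)$ belongs to the ambiguity set $\Gamma_{s,a}$ for every $x\in s$. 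Because $\Gamma_{s,a}$ is convex and compact and the objective in~\eqref{eq:rdp_lower_bound} is linear, the inner $\min$ and $\max$ are attained, so comparing against $\gamma_x$ is meaningful.

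First I would set up the finite-horizon objects. For the system, let $V_t(x)$ be the probability of reaching $\Xreach$ within $t$ steps while staying out of $\Xavoid$ under $\policy^*$; it satisfies $V_t(x)=1$ on $\Xreach$, $V_t(x)=0$ on $\Xavoid$, $V_0(x)=0$ on $\Xsafe\setminus\Xreach$, and $V_{t+1}(x)=\int_X V_t(x')\,\Tau(dx'\mid x,\policy^*(x))$ otherwise; by monotone convergence $V_t\uparrow\text{Pr}_{x}^{\policy^*}[\varphi_x]$. For the UMDP, let $\underline q^t(s):=\inf_{\xi\in\Adversary}\text{Pr}_{s}^{\strategy^*,\xi,t}[\varphi]$ and $\overline q^t(s):=\sup_{\xi\in\Adversary}\text{Pr}_{s}^{\strategy^*,\xi,t}[\varphi]$ be the $t$-horizon worst- and best-case values under the fixed strategy $\strategy^*$; these obey the recursion~\eqref{eq:rdp_lower_bound} with the $\max_a$ removed (the action is fixed to $\strategy^*(s)$) and with $\min_\gamma$ replaced by $\max_\gamma$ for $\overline q^t$. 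Since $\strategy^*$ is robust-optimal, $\underline q^t\uparrow\underline p$ and $\overline q^t\uparrow\overline p$.

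The core is then the claim, proved by induction on $t$, that $\underline q^t(s)\le V_t(x)\le\overline q^t(s)$ for every $x\in s$. The base case uses the $\varphi_x$-conservative partition: on $\Sreach$ both UMDP values equal $1$ and $s\subseteq\Xreach$ forces $V_t=1$, while $s_\text{avoid}\supseteq\Xavoid$ forces the failure value $0$ at that abstract state. For the inductive step on a safe non-reach region I split over successor cells, apply the inductive hypothesis cell-wise to obtain $\sum_{s'}\gamma_x(s')\underline q^t(s')\le V_{t+1}(x)\le\sum_{s'}\gamma_x(s')\overline q^t(s')$, and then bound the left sum below by $\min_{\gamma\in\Gamma_{s,a}}\sum_{s'}\gamma(s')\underline q^t(s')=\underline q^{t+1}(s)$ and the right sum above by $\max_{\gamma\in\Gamma_{s,a}}\sum_{s'}\gamma(s')\overline q^t(s')=\overline q^{t+1}(s)$, using $\gamma_x\in\Gamma_{s,a}$ from soundness. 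Passing to the limit $t\to\infty$ and recalling $x\in s$ yields the claimed interval.

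The hard part is that a UMDP adversary is only a function of the abstract state and time, whereas the system's lumped one-step law $\gamma_x$ genuinely depends on the exact continuous state $x$ inside $s$ and hence varies along a trajectory; one therefore cannot exhibit a single adversary that reproduces the system. The value-function induction sidesteps this by bounding pointwise through the worst (resp. best) element of $\Gamma_{s,a}$ at each state, which is exactly where soundness (to guarantee $\gamma_x\in\Gamma_{s,a}$) and convexity/compactness of $\Gamma_{s,a}$ (to ensure the optimizing $\gamma$ exists) are indispensable. A secondary but real subtlety concerns the upper bound: it requires that no safe non-reach region overlaps $\Xreach$ and that $s_\text{avoid}$ contains no reach points, i.e., that the partition respects the labeling of $\Xreach$ so that $\cup_{s\in\Sreach}s=\Xreach$; otherwise a state already in $\Xreach$ could have $V_t(x)=1>\overline p(s)$. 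This is the standard specification-aligned-partition assumption implicit in the $\varphi_x$-conservative construction, and I would make it explicit before invoking the upper-bound induction.
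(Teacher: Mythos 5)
Your core argument --- a value-iteration induction in which soundness (Theorem~\ref{thm:soundness}) places the lumped one-step distribution $\gamma_x(s') := \Tau(s' \mid x, \policy^*(x))$ inside $\Gamma_{s,a}$, so that cell-wise bounds propagate through the inner $\min$/$\max$ of the recursion and survive the limit $t\to\infty$ --- is exactly the route the paper takes: its proof of the lower-bound half is precisely this induction (soundness supplies $\gamma_x\in\Gamma_{s,a}$, the sum is then bounded by the inner minimization, and the limit is taken), with the upper bound deferred to the analogous argument in the cited literature. Your justification for working with value functions rather than trying to exhibit a single abstract adversary --- that $\gamma_x$ varies with $x$ inside a cell while an adversary only sees $(s,a,t)$ --- is also the right one. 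One minor remark: attainment of the inner optima (compactness/convexity of $\Gamma_{s,a}$) is convenient but not actually needed; the inequalities go through with $\inf$ and $\sup$.

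The gap is in your patch for the upper bound. You correctly detect that the base case breaks when a cell outside $\Sreach$ intersects $\Xreach$, and you propose requiring $\cup_{s\in\Sreach}s=\Xreach$. That is not sufficient: the induction also breaks at $s_{\mathrm{avoid}}$. The $\varphi_x$-conservative partition only guarantees $s_{\mathrm{avoid}}\supseteq\Xavoid$, so a point $x\in s_{\mathrm{avoid}}\setminus\Xavoid$ is \emph{not} a failure state of the concrete system and can have $V_t(x)>0$ (its successors may re-enter $\Xsafe$ and reach $\Xreach$), whereas soundness condition (ii) forces $\Gamma_{s_{\mathrm{avoid}},a}=\{\delta_{s_{\mathrm{avoid}}}\}$ and hence $\overline q^{\,t}(s_{\mathrm{avoid}})=0$ for every $t$; the inductive hypothesis $V_t \le \overline q^{\,t}$ thus already fails at $s_{\mathrm{avoid}}$, and your added condition that $s_{\mathrm{avoid}}$ contain no reach points does not exclude this. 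The same leakage contaminates safe cells: a concrete trajectory may traverse $s_{\mathrm{avoid}}\setminus\Xavoid$ and still satisfy $\varphi_x$, while its abstract trace is absorbed at $s_{\mathrm{avoid}}$ and fails, so $\text{Pr}_{x}^{\policy^*}[\varphi_x]>\overline p(s)$ is possible even for $x$ in a cell of $S_\safe$. The upper-bound half therefore requires full alignment of the partition with \emph{both} specification regions, $\cup_{s\in\Sreach}s=\Xreach$ and $s_{\mathrm{avoid}}=\Xavoid$; with both conditions your induction closes, and without them only the lower-bound half of the theorem --- which is also the only half the paper's own induction establishes --- is justified.
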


\section{Experimental Evaluation}
\label{sec:experiments}

We showcase the effectiveness of our approach to produce policies for unknown systems with formal satisfaction guarantees through the temperature regulation benchmark in~\cite{romao2023distributionally}. We consider two specifications: (i) safety, where the system's temperature is to be kept within safe bounds for $15$ time steps, and (ii) unbounded reach-avoidance, where the temperature is also to be regulated to the interval $[20.25, 20.75]$ within no specific time horizon and while remaining safe. We compare our safety results with the approach of \cite{romao2023distributionally}. 
%., which uses dynamic programming by approximately performing regression on the value functions. 
%Unlike our method, \cite{romao2023distributionally} does not provide formal guarantees.
% 
We validate our theoretical results through $500$ Monte Carlo simulations starting at each state.

The system is given in \cite{romao2023distributionally}, with $X_\safe = [19, 22]$. The noise is Gaussian with standard deviation of $0.15$, truncated to satisfy Assumption~\ref{ass:compact} with $X = [17.5, 23.5]$. The kernel is Gaussian with $\sigma_l = 1$, $\sigma_f = 10$, $N = 7000$ samples per control, and $\lambda = 0.001$. We obtain $S$ by uniformly partitioning $X$. 

The results of the safety problem are shown in Figure~\ref{fig:safety} and Table~\ref{tab:table}, where the latter shows results for different partitions and a comparison with \cite{romao2023distributionally}. Our method safely controls the system while providing formal bounds in the probability of safety, which are empirically validated. These bounds become tighter as $|S|$ increases and $\varepsilon_1$ decreases. However, this is not the case for the approach of \cite{romao2023distributionally}, where safety worsens with $|S|$. The reason is that \cite{romao2023distributionally} approximates the initial value function, which is discontinuous, by a continuous RKHS function. This causes its RKHS norm to increase with the number of approximation points, which decreases the safety probability, and the ability of the approach to produce a good policy. In addition to completely bypassing this issue, our approach yields formal performance guarantees. 

Finally, the results for the reach-avoid problem are given and empirically validated in Figure~\ref{fig:figure}, showing that our method yields high and tight bounds in the reach-avoid probability.

%\ml{this is a safety verification problem. Throughout the paper we talk about reach-avoid.  Can we do reach-avoid with one control even for a 1D system?}

%\ml{add a conclusion section}

%\iffalse
\begin{figure}[ht]
    \centering
    \begin{subfigure}{0.22\textwidth}
        \centering
        \includegraphics[width=\linewidth]{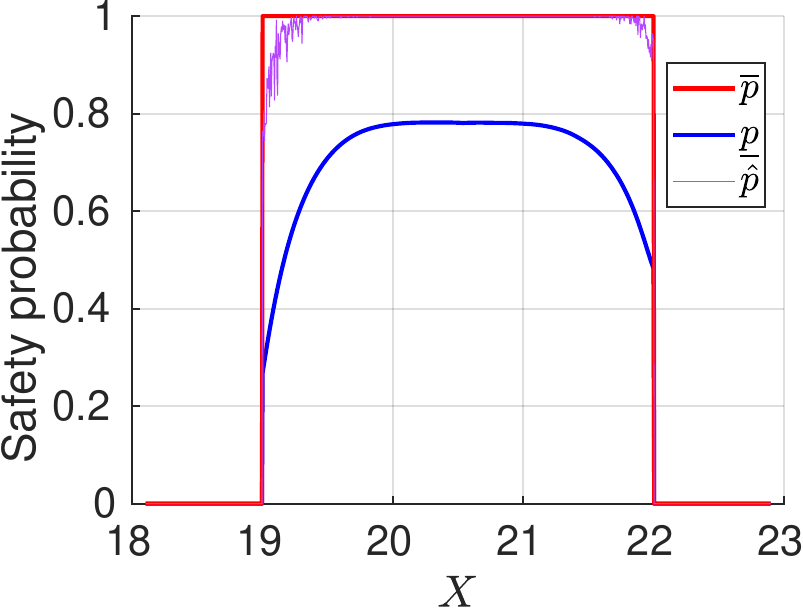}
        \caption{Safety}
        \label{fig:safety}
    \end{subfigure}
    \hfill
    \begin{subfigure}{0.22\textwidth}
        \centering
        \includegraphics[width=\linewidth]{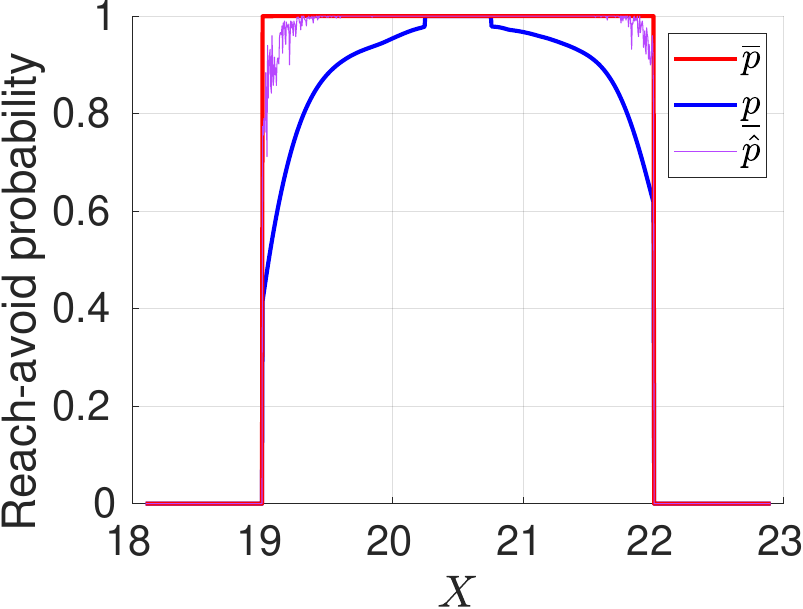}
        \caption{Reach-Avoid}
        \label{fig:reach}
    \end{subfigure}
    \caption{Theoretical bounds $\underline{p},\overline{p}$ and empirical probability $\hat p$ as a function of the temperature $x\in X$}
    \label{fig:figure}
\end{figure}

\begin{table}[]
    \centering
    \caption{Safety case study. $\underline p_{avg}$ denotes the average over $s_0\in S_\safe$ of $\underline p(s)$, 
    % for our method, and the approximate satisfaction probability for that in \cite{romao2023distributionally}. 
    and $\underline p_{avg}$ is the average empirical satisfaction probability. Reported times are in minutes.}
    \label{tab:table}
    \scalebox{0.94}{
    \begin{tabular}{c|c c c c c c c}
    \toprule
            \multirow{2}{*}{Method} & \multirow{2}{*}{$|S|$} & \multirow{2}{*}{$\varepsilon_1$} & \multirow{2}{*}{$e_{avg}$} & \multirow{2}{*}{$\underline{p}_{avg}$} & \multirow{2}{*}{$\hat{p}_{avg}$} & Abstr. & Synth.
 \\
         &  &  &  & &  & Time & Time \\
        \hline
        Ours & $35$ & $0.09$ & $0.936$ & $0.064$ & $0.991$ & $0.539$ & $0.195$ \\
        \cite{romao2023distributionally} & &  & \text{N/A} & $0.250$ & $0.982$ & $0.134$ & $0.120$ \\
        \hline
        Ours & $1e2$ & $0.023$ & $0.700$ & $0.300$ & $0.992$ & $1.406$ & $0.494$ \\
        \cite{romao2023distributionally} & &  & \text{N/A} & $0.117$ & $0.988$ & $0.138$ & $0.355$ \\
        \hline
        Ours & $5e2$ & $0.034$ & $0.388$ & $0.612$ & $0.989$ & $6.899$ & $30.568$ \\
        \cite{romao2023distributionally} & &  & \text{N/A} & $0.01$ & $0.990$ & $0.1372$ & $1.840$ \\
        \hline
        Ours & $1e3$ & $0.035$ & $0.288$ & $0.712$ & $0.990$ &  $21.883$ & $220.126$ \\
        \cite{romao2023distributionally} & &  & \text{N/A} & $0$ & $0.610$ &$0.134$ & $3.914$ \\
    \bottomrule
    \end{tabular}
    }
\end{table}
%\fi
\section{Conclusion}
\label{sec:conclusion}

We presented a framework for formal data-driven control of stochastic systems under complex specifications, via CMEs and UMDP abstractions. Our approach is the first one to obtain a sound finite abstraction of a CME. We empirically validated our method through a temperature regulation benchmark, obtaining better results than state-of-the-art CME-reliant approaches. A promising future research direction involves leveraging deep kernels to increase tightness of the abstraction to yield better results with finer discretizations.

\bibliographystyle{acm}
\bibliography{refs}

\end{document}